\def\smallddots{\mathinner{\raise7pt\hbox{.}\raise4pt\hbox{.}\raise1pt\hbox{.}}}
\def\smallsdots{\mathinner{\raise1pt\hbox{.}\raise4pt\hbox{.}\raise7pt\hbox{.}}}
\numberwithin{equation}{section}
\numberwithin{table}{section}
\newtheorem{theorem}{Theorem}[section]
\newtheorem{example}{Example}[section]
\begin{document}

\centerline{{\Large \bf Fast Feasible and Unfeasible Matrix Multiplication}}  

\medskip

\medskip

\centerline{Victor Y. Pan} 

\medskip

\medskip

\centerline{$^{[1]}$~~Department of Computer Science}

\centerline{Lehman College of the City University of New York}

\centerline{Bronx, NY 10468 USA}

\centerline{victor.pan@lehman.cuny.edu}
\centerline{http://comet.lehman.cuny.edu/vpan/}
\centerline{and} 
 
\centerline{$^{[2]}$~~Ph.D. Programs in Mathematics  and Computer Science}
\centerline{The Graduate Center of the City University of New York}
\centerline{New York, NY 10036 USA}

 \date{}



 
\begin{abstract}
Matrix-by-matrix multiplication (hereafter {referred to as \em MM}) is a  fundamental operation omnipresent in modern computations in Numerical and Symbolic Linear Algebra.
Its acceleration makes major impact on various fields of Modern Computation and has been a highly recognized research subject for about five decades. 
The researchers introduced
 amazing novel techniques, found
 new insights into MM and numerous  related  computational problems, and devised advanced algorithms that performed  $n\times n$ MM by using less than $O(n^{2.38})$ scalar arithmetic operations
versus $2n^3-n^2$
of the  straightforward MM, that is, 
more than half-way to the information lower bound $n^2$.  The record upper bound 3 of 1968 on the exponent of the  complexity MM decreased below 2.38 by 1987 and has been extended to  various celebrated  problems in many areas of computing and became most extensively cited constant of the Theory of Computing. The progress in decreasing the record exponent, however, has virtually  stalled since 1987, while many scientists are still anxious to know its sharp bound, so far restricted to the range from 2 to about 2.3728639.
Narrowing this range remains a celebrated challenge. 

Acceleration of MM  in the Practice of Computing is a distinct challenge, 
much less popular, but also highly important. Since 1980 the progress towards meeting the two challenges has followed two distinct paths because of the {\em curse of recursion} -- all the known algorithms 
supporting the exponents below 2.38 or even below 2.7733 involve long sequences of nested recursive steps, which 
 blow up the size of an input matrix.  
 As a result all these algorithms
improve straightforward MM
  only for unfeasible MM of immense size,  greatly exceeding the sizes of interest nowadays and in any foreseeable future. 
  
 It is plausible and surely highly desirable that someone could eventually decrease the record MM exponent towards its sharp bound 2 even without ignoring the curse of recursion, but currently there are two distinct challenges of the acceleration of feasible and unfeasible MM.   
  
   In particular various known algorithms supporting the exponents in the range between 2.77 and 2.81 are quite efficient for feasible MM and have been implemented. Some of them make up a valuable part of modern software for numerical and symbolic matrix computations, extensively worked on in the last decade.  Still, that work has mostly relied on the MM algorithms proposed more than four decades ago, while more efficient algorithms are well known,  some of them appeared in 2017.
   
  In our
 review we first survey the mainstream study of the acceleration of MM of unbounded sizes, cover the progress in decreasing the exponents of MM, comment on its impact  on the theory and practice of computing, and  recall various fundamental concepts and techniques supporting fast MM
and naturally introduced in that study by 1980.  Then we demonstrate how the curse of recursion naturally entered the game
of decreasing the record exponents. Finally we cover the State of the Art of efficient  feasible MM, including some most efficient known techniques and algorithms as well as  various issues of  numerical and symbolic implementation. 

We hope that our review  will help motivate and properly focus further effort in this highly important area.
 
\end{abstract} 




\paragraph{\bf 2000 Math. Subject Classification:}
68Q25, 65F05, 15A06,
15A69, 01A60, 15-03

\paragraph{\bf Key Words:}
Matrix multiplication (MM),
Bilinear algorithms,
Tensor decomposition,
Trilinear aggregation,
Disjoint MM,
APA-algorithms,
Computational Complexity,
Exponent of MM,
Curse of recursion,
Feasible MM,
Numerical MM,
Symbolic MM


\section{Introduction}\label{sintr}





Matrix-by-matrix multiplication (MM)
is omnipresent in modern computations
for Applied Mathematics,  Statistics, Physics, Engineering,
   Combinatorics, Computer Science, Signal and Image Processing, and Bioinformatics (cf., e.g., \cite{LT91}, \cite{P94})
 and is performed billions times per day around the globe.
    For a couple of specific examples, computing radiation exchange in enclosures can be written in
matrix form, mainly as an $N\times N$ MM with $N$ up to 10,000 \cite{GEKM14}, and gene expression analysis
in bioinformatics involves $N\times N$ MMs 
with $N$ up to 4,000 \cite{S12}.


The  straightforward  
MM involves cubic time, namely
$2n^3-n^2$ arithmetic operations for $n\times n$, and  until 1969
 it was commonly believed that 
 the order of $n^3$ operations are necessary. This belief died in 1969, when Volker Strassen decreased the exponent 3 of the complexity of MM to 2.8074 in \cite{S69}. Further decrease of the exponent towards  its information lower bound 2 has instantly become the subject of  worldwide interest and the goal of intensive study by literally all experts in this field. 
Various novel amazing but fundamental techniques have been proposed, new insights into fast MM and some related areas of modern computations have been introduced, and the MM exponent has been decreased below 2.38, that is, more than half-way from the classical 3 to the information lower bound 2. These results have been  widely recognized in the Theory of Computation, and
2.38 became the absolute constant of that field, most frequently cited there,  because it bounds the exponents of the complexity of various celebrated  problems in many areas of computing linked to MM. 

The progress was not going smoothly. It  stalled for nearly a decade after 1969, resumed in 1978 with \cite{P78}, followed with new significant advances in 1979--1982 and again in 1986, and since then again virtually  stalled.
Even more serious problem with the record-breaking MM algorithms is {\em the curse of recursion} -- they rely on long sequences of nested recursive processes, which 
 blow up the input size. The resulting algorithms supersede the straightforward MM  only when this size becomes immense -- greatly exceeding the level of MM in use nowadays or even in any foreseeable future.
 
 It is plausible and of course highly desirable that new breakthrough 
 would eventually result in an algorithm for fast feasible MM supporting a nearly sharp exponent close to 2, but 
so far the acceleration of  feasible MM of realistic sizes is studied as a distinct hard task for which one cannot ignore  the curse of recursion. So far the known algorithms for feasible MM only support the exponents below 2.7734
 (see \cite{P82}), unbeaten for 36 years since 1982 but not much recognized.  
Nevertheless in spite of their  association with 
relatively large exponents,
some fast algorithms for feasible MM make up a valuable part of modern software for numerical and symbolic matrix computations.
 Implementation work for fast feasible MM has been intensified within the last decade but still mostly relies on 
 algorithms that are more than four decades old and have already been significantly improved (see
\cite{K04}, \cite{S13}, and \cite{KS17}).
  
In our present survey 
 we first trace the history of the  decrease of the  exponent of MM,
 list some celebrated problems of the Theory of Computing
linked to MM and sharing with MM the exponent of the complexity, and naturally recall the fundamental concepts and techniques for fast MM, introduced by 1981, such as {\em recursive  bilinear and trilinear algorithms}, {\em tensor decomposition}, {\em trilinear aggregation},  {\em disjoint MM},   {\em any precision approximation}, 
 and the {\em EXPAND, PRUNE, and CONQUER}
 techniques. We also  show how the curse of recursion naturally entered the game. Finally we 
  bring our
survey closer to the earth by covering the 
acceleration of feasible MM of  realistic sizes and some relevant methods and techniques
such as the {\em 3M method} for complex MM,
its extension to {\em polynomial MM}, 
{\em computer-aided ALS search}, and {\em randomization technique} for fast MM.  We also discuss  
 the  implementation issues of fast MM, including numerical stability, data movement, and symbolic implementation. 

 We hope that our work will help motivate and properly focus further study 
and further progress in  this highly important area of Modern Computations.

 

\section{Straightforward MM and First Fast MM Algorithms}\label{sblnpa} 


\subsection{MM and its subproblems MV and VV. Optimality of  straightforward MV and VV}\label{sstrvvmv} 


 $k\times m$ by $m\times n$ MM 
 (hereafter referred to as MM$(k,m,n)$)
 is the computation of the product $C$ of two matrices $A$ of size $k\times m$ and
$B$ of size $m\times n$ whose entries can be numbers, variables, or matrices:
\begin{equation}\label{eqMM}
C=AB,~{\rm for}~
A=(a_{i,j})_{i,j=1}^{k,m},~B=(b_{j,h})_{j,h=1}^{m,n},
~C=(c_{{i,h}})_{i,h=1}^{k,n}, 
\end{equation}
$$c_{i,h}=a_{i,1}b_{1,h}+a_{i,2}b_{2,h}+\cdots+a_{i,m}b_{m,h},~~i=1,\dots,k,~~
h=1,\dots,n.$$
 
MM$(k,m,1)$
is {\em matrix-by-vector multiplication} (hereafter  referred to as {\em  MV)},
$$c_{i}=a_{i,1}b_{1}+a_{i,2}b_{2}+\cdots+a_{i,m}b_{m},~~i=1,2,\dots,k;$$
$M(1,m,1)$ is the computation of the {\em inner product of two vectors} of length $m$ (hereafter 
{\em VV}),
$$c=a_{1}b_{1}+a_{2}b_{2}+\cdots+a_{m}b_{m},$$
and we write MM$(n)$ for MM$(n,n,n)$.

 MV and MM$(n)$ are made up of their $n$ and $kn$ subproblems VV, respectively. Like MM both MV and VV are omnipresent in modern computations in Linear and Multilinear Algebra. 

The straightforward classical algorithm for  VV first computes the $m$ products  
$a_1b_1$, $a_2b_2,\dots,$ $a_mb_m$ and then sum them together. For 
MV and MM we can apply the same algorithm $k$ and $kn$ times, respectively. This  is optimal
for VV and MV --  any algorithm that performs VV or MV by using 
only scalar arithmetic operations, that is, {\em add, subtract, multiply, and divide}, and
that uses no branching must perform 
at least as many scalar additions and subtractions and at least as 
many scalar multiplications and 
divisions as the straightforward algorithm.  The proofs rely 
  on the techniques of {\em active operation -- basic substitution}
from \cite{P66},  also covered and extended in \cite[Section 2.3]{BM75}, \cite{K97}, and  Sections  ``Pan's method" in \cite{S72} and \cite{S74}.

Sparse and structured MV, also omnipresent in modern computations,
  can be performed much faster. In particular   arithmetic time  linear  in $n$ up to a logarithmic or polylogarithmic factor is sufficient in order to 
 multiply  by a vector an  
$n\times n$ matrix with a structure of Toeplitz, Hankel, Cauchy, or Van\-der\-monde type and even to solve a nonsingular structured linear system of $n$ equations with such a coefficient
matrix  
  versus quadratic or cubic time 
 required for the same computations with a general matrix. See \cite{P01}, \cite{P15},
and the bibliography therein for computations with structured matrices
(aka {\em data sparse} matrices);  see
\cite{BG12}, \cite{BDKSa},
 \cite{ABBDa}, and the bibliography therein for computations with sparse matrices.


\subsection{The first fast MM algorithms}\label{sffmm} 

 
The  straightforward MM uses 
$kmn$ scalar multiplications and
$kmn-kn$ scalar additions, that is, 
 $n^3$ and
$n^3-n^2$ for $k=m=n$. 
Until 1969 the scientific world 
believed that this  algorithm is optimal, although already in  1968 the experts  knew   from the following example that this 
was not true (see \cite{W68}
and notice technical similarity to the algorithms for polynomial evaluation with 
preprocessing in \cite{P66} and \cite{K97}).

\begin{example}\label{exw0}  
For any even positive integer $n$, the inner product of two vectors  ${\bf a}=(a_j)_{j=1}^n$
and ${\bf b}=(b_j)_{j=1}^n$  satisfies the following identity,
\begin{equation}\label {eqcmmm}
{\bf a}^T{\bf b}=\sum_{i=1}^{n/2}(a_{2i-1}+b_{2i})(b_{2i-1}+a_{2i})-
\sum_{i=1}^{n/2}a_{2i-1}a_{2i}-\sum_{i=1}^{n/2}b_{2i-1}b_{2i}.
\end{equation} 
 By combining such 
 identities for  $n^2$ inner products 
defining $n\times n$ MM, perform  
MM$(n)$ by using  $0.5  n^3 +n^2$ 
scalar multiplications
and $1.5n^3+2n^2-2n$ scalar additions and subtractions, thus replacing 
about 50\% of multiplications of the straightforward MM by additions.
\end{example} 
 
In the 1960s a floating point multiplication was usually two or three times slower than a  floating point addition,
and so the algorithm had some practical value. Not so anymore nowadays because multiplication is about as fast as addition, but the algorithm is still a historical landmark as the {\em first fast MM}. It was not fast enough in order to attract the attention of non-experts, however, and so 
 the news that the straightforward  MM is not optimal has awaken scientific world only a little later, 
when Volker Strassen  presented the following celebrated algorithm. 

\begin{example}\label{ex0} 
{\em Strassen's $2\times 2$ MM}  \cite{S69}. \\
Compute the product $C=AB$ of a pair of $2\times 2$ matrices, for
\begin{equation}\label{eqmmABC}
A=\begin{pmatrix} 
a_{11} & a_{12} \\ 
a_{21} & a_{22}
 \end{pmatrix},~B=\begin{pmatrix} 
b_{11} & b_{12} \\ 
b_{21} & b_{22}
 \end{pmatrix},~C=AB=\begin{pmatrix} 
c_{11} & c_{12} \\ 
c_{21} & c_{22}
 \end{pmatrix},
 \end{equation}
  by using the following expressions, 
 
 $p_1=(a_{11}+a_{22})(b_{11}+b_{22}),~p_2=(a_{21}+a_{22})b_{11},~p_3=a_{11}(b_{12}-b_{22})$,  
$p_4=a_{22}(b_{21}-b_{11}), \\
~p_5=(a_{11}+a_{12})b_{22},~
p_6=(a_{21}-a_{11})(b_{11}+b_{12})$,  
$p_7=(a_{12}-a_{22})(b_{21}+b_{22})$,  \\
$c_{11}=p_1+p_4+p_7-p_5,~c_{12}=p_3+p_5,~c_{21}=p_2+p_4,~c_{22}=p_1+p_3+p_6-p_2$.
\end{example}

The algorithm uses 7 scalar multiplications and 18 scalar
additions and subtractions versus 8 scalar multiplications and 4 scalar
additions of the  straightforward MM.
The trade-off seems to favor the straightforward algorithm, but let all matrix entries 
be  $2\times 2$ matrices,  reapply the algorithm for all 7 auxiliary $2\times 2$ MMs,
and arrive at   
$4\times 4$ MM  using 49 scalar  multiplications versus 64 in the straightforward MM. Then replace the input entries of the new algorithm by $2\times 2$ matrices and reapply the algorithm of  Example \ref{ex0} to all 49 auxiliary $2\times 2$ MMs.
Perform $d$ such recursive steps and arrive at $n\times n$  MM that uses $7^g=n^{\log_2 (7)}$  scalar  multiplications for $n=2^g$ versus $8^g=n^3$ in the  straightforward algorithm.
Addition or subtraction of $s\times s$ matrix 
involves just $s^2$ scalar additions or subtractions, and  the recursive algorithm based on Example \ref{ex0}  performs MM$(n)$
for $n=2^g$ by applying
$6\cdot 7^g-6\cdot 4^g<6n^{\log_2 (7)}$ scalar additions and subtractions.
This is an example of 
 the large class of divide and conquer
 algorithms, which recursively reduce an 
 original  computational problem to those of smaller size. Such algorithms have been extensively used  in various areas of computing, fast Fourier transform being a notable example
 (see Appendix \ref{scnvfft}).

One can decrease the arithmetic cost   
to $5\cdot 7^g-5\cdot 4^g<5n^{\log_2 (7)}$
by applying similar recursive divide and conquer process based on the algorithm of the following example, which performs $2\times 2$ MM by
using 7 scalar multiplications and 15 scalar additions and subtractions. 

\begin{example}\label{ex1} {\em Winograd's $2\times 2$ MM}
(cf.\  \cite{F74}, \cite[pages 45--46]{BM75}).

Compute a $2\times 2$ matrix product $C=AB$ of (\ref{eqmmABC}) 
 by using the following expressions, 

$s_1=a_{21}+a_{22}$, 
$s_2=s_1-a_{11}$,
$s_3=a_{11}-a_{21}$,
$s_4=a_{12}-s_2$,

$s_5=b_{12}-b_{11}$,
$s_6=b_{22}-s_5$, 
$s_7=b_{22}-b_{12}$,
$s_8=s_6-b_{21}$, 

$p_1=s_2s_6,~p_2=a_{11}b_{11},~p_3=a_{12}b_{21}$,  \\ 
	$p_4=s_3s_7,~p_5=s_1s_5,~p_6=s_4b_{22}$, 
$p_7=a_{22}s_8$,  

$t_1=p_1+p_2,~t_{2}=t_1+p_4$, $t_{3}=t_1+p_5$,  \\
$c_{11}=p_2+p_3,~c_{12}=t_3+p_6,~c_{21}=t_2-p_7,~c_{22}=t_2+p_5$.
\end{example} 

With some additional care one can perform $n\times n$ MM for any $n$ by using 
$c\cdot n^{\log_2 (7)}$ arithmetic operations
for $c\approx 4.54$ and $c\approx 3.92$
based on  Examples \ref{ex0}
and  \ref{ex1}, respectively (see \cite{F74}).  
 

\subsection{Impact of  the acceleration of MM}\label{sblnpex} 

 
 The news that the cubic arithmetic time of the straightforward MM is not a barrier anymore flew many times around the globe as a scientific sensation of the year of 1969, and the scientists expected that very soon the classical exponent 3
 will be decreased to its information lower bound 2,
 that is, that very soon $n\times n$ MM will be performed in quadratic or nearly 
 quadratic time,
  required already in order to access the 
 $2n^2$ input entries as well as in order to output $n^2$ entries of the matrix.
 Of course this decrease was a most exciting  perspective --
 according to \cite[page 248]{TB97} ``such a development would trigger the greatest upheaval in the history of numerical computations."
 
 The scientists in many fields were excited 
 because it was known that the acceleration of MM can be readily extended to a 
 variety of popular and long-studied problems of Linear   
Algebra and Computer Science, linked to MM and
 sharing with MM the exponent of their complexity estimates.\footnote{The algorithm designers
 try hard to reduce various problems of modern numerical and symbolic computations to MM, with no considerable overhead, because MM, and even the straightforward cubic time MM, has been very efficiently  implemented on  modern computers.} Their 
 list includes
Boolean MM, computation of paths and distances in graphs,  parsing con\-text-free grammars, 
the solution of a nonsingular linear system of equations, computations of the inverse, determinant,
characteristic and minimal polynomials, and various factorizations of a matrix (see
  \cite{S69}, \cite{BH74}, 
 \cite[Sections 6.3--6.6]{AHU74}, \cite[pages 49--51]{BM75},
\cite{V75}, \cite[Sections 18--20]{P84b},
 \cite[Chapter 2]{BP94}, \cite{AGM97}, \cite{DI00},  \cite{L02}, \cite{Z02}, 
 \cite{YZ04}, \cite{YZ05}, \cite{YZ05a},  \cite{KSV06}, \cite{BJS08}, 
 \cite{AP09}, \cite{DP09}, \cite{Y09}, \cite{SM10}, 
 \cite{LG12}, 
 and \cite{HLSa}. 
 In particular  Strassen's acceleration of MM in \cite{S69} implied  the decrease of the known complexity exponent 3 of the cubic solution time  to $\log_2(7)\approx 2.8074$ for all these computational problems.
 
 The challenge brought  MM and the complexity of  algebraic computations to the  limelight and motivated 
tremendous effort of numerous researchers around the globe, who competed for breaking the record of \cite{S69}.  As Donald E. Knuth recalls,  this ``was not only a famous unsolved problem for many years, it also was worked on by all of the leading researchers in the field, worldwide."   


\section{Bilinear Computational Problems and  Bilinear Algorithms}\label{sblnalg} 

 
 A natural  framework for their effort was the class of  noncommutative bilinear algorithms, also called just {\em bilinear algorithms}.
Such an algorithm solves a {\em bilinear computational problem} of the evaluation of 
a set of bilinear forms $c_h$, $h=1,\dots,\gamma$, in two sets  of variables.
  MM$(k,m,n)$ is a special case of this  problem where one evaluates
 a set of $kn$ bilinear forms $c_{i,h}$
such that $C=(c_{i,h})_{i,h=1}^{k,n}=AB$
for a pair of input matrices 
$A=(a_{i,j})_{i,j=1}^{k,m}$
and $B=(b_{j,h})_{j,h=1}^{m,n}$,
but it is more convenient to 
study the general case first.

Let these sets fill two vectors
${\bf a}=(a_i)_{i=1}^{\alpha}$
and ${\bf b}=(b_j)_{j=1}^{\beta}$,
let $T=(t_{i,j,h})_{i,j,h=1}^{\alpha,\beta,\gamma}$ denote the  {\em 3-dimensional tensor}  filled
with constants $t_{i,j,h}$
from  a fixed ring
(e.g., integers, rational, real, or complex numbers, or matrices filled with such numbers), and represent that problem as follows,
\begin{equation}\label{eqbln}    
 c_{h}({\bf a},{\bf b})=\sum_{i,j=1}^{\alpha,\beta}t_{i,j,h}a_ib_j,~{\rm
for}~h=1,\dots,\gamma.
\end{equation}  

A {\bf  bilinear algorithm} $\mathbb {BA}$ solves this {\em problem of size}
$(\alpha,\beta,\gamma)$ 
by successively computing 

(i) two sets of $2r$ linear forms,
$l_q=l_q({\bf a})$ and $l_q'=l_q'({\bf b})$, $q=1,\dots,r$, in the variables
$a_1,\dots$,$a_{\alpha}$ and 
$b_1,\dots$,$b_{\beta}$, which are
 the coordinates 
 of the vectors 
${\bf a}=(a_i)_{i=1}^{\alpha}$ and 
${\bf b}=(b_j)_{j=1}^{\beta}$, respectively, 

(ii) $r$ pairwise products $l_1l_1,\dots,l_rl_r'$, and

(iii) the $\gamma$ bilinear forms $c_{1}({\bf a},{\bf b}),\dots,c_{\gamma}({\bf a},{\bf b})$ as 
$\gamma$ linear combinations of these products.

\medskip

The straightforward VV and MV
and the algorithms of Examples \ref{ex0} and  \ref{ex1} for $2\times 2$ MM are examples of bilinear algorithms for MM.
At the end of this section we recall 
two other bilinear algorithms for two 
important and popular bilinear problems 
(see Examples \ref{excmplpr} and \ref{excmplpr}). 
 
\medskip

The number $r$ of bilinear multiplications 
at stage (ii) is called the {\em rank} of the algorithm.

The constant coefficients in parts (i), (ii), and (iii) form three matrices
$U=(u_i^{(q)})_{i,q=1}^{\alpha,r}$,
$V=(v_j^{(q)})_{j,q=1}^{\beta,r}$, 
and 
 $W=(w_h^{(q)})_{h,q=1}^{\gamma,r}$,
such that
\begin{equation}\label{eqblnr}
c_h=\sum_{q=1}^{r}w_h^{(q)}l_ql'_q,~
{\rm for}~
l_q=\sum_{i=1}^{\alpha}u_i^{(q)}a_i,~
 l'_q=\sum_{j=1}^{\beta}v_j^{(q)}b_j,~h=1,\dots,\gamma,~
{\rm and}~q=1,\dots,r.
\end{equation}
 

 The {\em rank of a bilinear computational problem} 
is the minimal rank of bilinear algorithms for that problem.
It depends on the field of constants,
e.g., can be different 
for real and complex constants. 

\medskip

A bilinear algorithm  $\mathbb {BA}$ above
 performs 
 
(i) $r$ bilinear  multiplications
of $l_q$ by  $l'_q$ for
$q=1,\dots,r$; 

(ii) $(\alpha+\beta+\gamma)r$
 multiplications by scalars $u_i^{(q)}$, $v_j^{(q)}$,
and $w_h^{(q)}$, for $i=1,\dots,\alpha$; $j=1,\dots,\beta$;
$h=1,\dots,\gamma$, and $q=1,\dots,r$; 

(iii)  $(\alpha-1)r$ 
additions of scaled variables $a_i$,  $i=1,\dots,\alpha$;
$(\beta-1)r$ additions
of scaled variables $b_j$, $j=1,\dots,\beta$,
and $(r-1)\gamma$ additions 
of scaled bilinear products $l_ql'_q$, $q=1,\dots,r$.

\medskip

These upper estimates decrease in the case of  sparse matrices $U$, $V$, and $W$.
Let $nnz(M)$ and $n_*(M)$ denote the numbers of  entries of a matrix $M$ that are nonzero and are neither of  0,1, and $-1$, respectively. Then the above bilinear algorithm performs  at most 
$n_*(U)+n_*(V)+n_*(W)$ scalar
multiplications and at most
$(nnz(U)-r)+(nnz(V)-r)+(nnz(W)-\gamma)$
scalar additions and subtractions. 

Fast bilinear algorithms for the following two bilinear problems  
enable fast practical complex and polynomial MM, respectively (see Section 
\ref{scmplpl}). 
 
\begin{example}\label{excmplpr}
{\rm Multiplication of two complex numbers.}
Evaluate 
the two bilinear forms
$a_1b_1-a_2b_2$  and $a_1b_2+a_2b_1$,
which represent the real and imaginary parts,
respectively, 
 of the product of two complex numbers $a_1+{\bf i}a_2$ and $b_1+{\bf i}b_2$.
The straightforward bilinear algorithm for this problem has rank 4, but here is a rank-3 algorithm: 

$l_1l_1'=a_1b_1$, $l_2l_2'=a_2b_2$, 
$l_3l'_3=(a_1+a_2)(b_1+b_2)$, 
 
$a_1b_1-a_2b_2=l_1l_1'-l_2l'_2$,
 $a_1b_2+a_2b_1=l_3l_3'-l_1l_1'-l_2l_2'$.

\end{example}

\begin{example}\label{explpr} {\rm Convolution.}
Compute  the coefficients of the {\em product} $c(x)=\sum_{h=0}^{m+n}c_hx^h$ 
 {\em of two polynomials} $a(x)=\sum_{i=0}^{m}a_ix^i$ and 
$b(x)=\sum_{j=0}^{n}b_jx^j$ or, equivalently, 
the 
{\em convolution} of the coefficient vectors of these
two polynomials, $c_h=\sum_{g=0}^ka_gb_{h-g}$,
for $h=0,\dots,m+n$, where $a_i=b_j=0$ for $i>m$ and $j>n$.
The straightforward  algorithm solves this problem by applying $(m+1)(n+1)$ scalar multiplications and $(m+1)(n+1)-m-n-1$
scalar additions, but FFT-based bilinear algorithm uses just 
$O((m+n)\log(m+n))$ arithmetic operations
(see Appendix \ref{scnvfft}). 
\end{example}

We refer the reader to
\cite{F72}, \cite{F72a},
\cite{P72}, \cite{BD73}, \cite{HM73}, \cite{S73}, \cite{P74}, \cite{BD76}, and \cite{BD78} 
for the early study of bilinear algorithms and to \cite[Section 2.5]{BM75} for its
 concise exposition, and to \cite[part 3 of Theorem 1]{P72}, \cite{R02},
and \cite[part 3 of Theorem 0.1]{P14}
for some results on the reduction from non-bilinear MM to bilinear MM.


\section{Tensor Representation of Bilinear  Algorithms and Tensor Product}\label{stnsrprpr} 


Observe that
 a bilinear algorithm $\mathbb {BA}$ of rank $r$  of Section \ref{sblnalg}
can be equivalently represented as a rank-$r$ decomposition of the tensor $T=(t_{i,j,h})_{i,j,h}$:
\begin{equation}\label{eqtnsrdcp}
t_{i,j,h}=\sum_{q=1}^ru_i^{(q)}v_j^{(q)}w_h^{(q)}~{\rm for}~
i=1,\dots,k;~j=1,\dots,m;~h=1,\dots,n.
\end{equation}
This implies that {\em the rank of a bilinear computational problem is precisely the rank of its tensor.}

Now suppose that two tensors $T=(t_{i,j,h})_{i,j,h=1}^{\alpha,\beta,\gamma}$
and $T'=(t'_{'i,j',h'})_{i',j',h'=1}^{\alpha',\beta',\gamma'}$ define two sets of bilinear forms
of the sizes $(\alpha,\beta,\gamma)$ and 
 $(\alpha',\beta',\gamma')$, respectively,
 and define another set of bilinear forms of the size 
  $(\alpha\alpha',\beta\beta',\gamma\gamma')$ by    
   the {\em tensor product}
\begin{equation}\label{eqtnsrprd}
T\otimes T'=
(t_{i,i',j,j',h,h'})_{i,i',j,j',h,h'=1}^{k,m,m,n,n,k}.
\end{equation}

\begin{theorem}\label{thtnsrprd}
Given two tensors 
$T=(t_{i,j,h})_{i,j,h}$ 
 of rank $r$ and $T'=
(t'_{i',j',h'})_
{i',j',h'}$,
 of rank $r'$, such that 
$t_{i,j,h}= 
\sum_{q=1}^{r}u^{(q)}_iv^{(q)}_jw^{(q)}_h$
for all $i$, $j$, and $k$ and 
$t'_{i'^{(q)},j'^{(q)},h'^{(q)}}=\sum_{q'=1}^{r'}u'^{(q)}_{i'}v'^{(q)}_{j'}w'^{(q)}_{h'}$  for all $i'$, $j'$, and $k'$,
the tensor product 
$T\otimes T'=(t_{i,i',j,j',h,h'})_{i,i',j,j',h,h'}$ has 
rank at most $rr'$.
\end{theorem}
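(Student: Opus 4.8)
The plan is to combine the two given rank decompositions multiplicatively, pairing each rank-one term of $T$ with each rank-one term of $T'$. First I would restate the hypotheses cleanly: by (\ref{eqtnsrdcp}) there are constants with $t_{i,j,h}=\sum_{q=1}^{r}u_i^{(q)}v_j^{(q)}w_h^{(q)}$ for all $i,j,h$ and $t'_{i',j',h'}=\sum_{q'=1}^{r'}u'^{(q')}_{i'}v'^{(q')}_{j'}w'^{(q')}_{h'}$ for all $i',j',h'$. By the definition (\ref{eqtnsrprd}) the general entry of $T\otimes T'$ is the product $t_{i,j,h}\,t'_{i',j',h'}$, with the six indices grouped into the three pairs $(i,i')$, $(j,j')$, $(h,h')$.

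Next I would substitute the two decompositions into this product and expand the resulting finite double sum over the fixed ring of constants:
\[
t_{i,j,h}\,t'_{i',j',h'}=\sum_{q=1}^{r}\sum_{q'=1}^{r'}\bigl(u_i^{(q)}u'^{(q')}_{i'}\bigr)\bigl(v_j^{(q)}v'^{(q')}_{j'}\bigr)\bigl(w_h^{(q)}w'^{(q')}_{h'}\bigr).
\]
Reindexing the pair $(q,q')$ by a single index $p$ ranging over a set of cardinality $rr'$, and setting $\widehat u^{(p)}_{(i,i')}:=u_i^{(q)}u'^{(q')}_{i'}$, $\widehat v^{(p)}_{(j,j')}:=v_j^{(q)}v'^{(q')}_{j'}$, $\widehat w^{(p)}_{(h,h')}:=w_h^{(q)}w'^{(q')}_{h'}$, exhibits $T\otimes T'$ as a sum of $rr'$ rank-one tensors — that is, a decomposition of the form (\ref{eqtnsrdcp}) with $r$ replaced by $rr'$; equivalently, the new factor matrices are the Kronecker products $U\otimes U'$, $V\otimes V'$, $W\otimes W'$ of the factor matrices of the two given decompositions. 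Since the rank of a bilinear computational problem is precisely the rank of its tensor, this gives $\rank(T\otimes T')\le rr'$. In algorithmic language the combined bilinear algorithm operates on the pooled variable vectors ${\bf a}=(a_{i,i'})$ and ${\bf b}=(b_{j,j'})$: it forms the $rr'$ linear forms $\widehat l_p=\sum_{i,i'}\widehat u^{(p)}_{(i,i')}a_{i,i'}$ and $\widehat l_p'=\sum_{j,j'}\widehat v^{(p)}_{(j,j')}b_{j,j'}$, multiplies them pairwise, and recombines via the $\widehat w^{(p)}$, matching (\ref{eqblnr}).

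The only real content here is the bookkeeping, so I do not expect a genuine obstacle. One must check that the distributive expansion above is legitimate (it is, being a finite sum in a ring), that the pair index $(q,q')$ truly linearizes — so that each $\widehat u^{(p)}$, $\widehat v^{(p)}$, $\widehat w^{(p)}$ is a well-defined constant vector and each $\widehat l_p\widehat l_p'$ is an honest bilinear multiplication — and that the index order chosen in (\ref{eqtnsrprd}) is carried consistently through the computation of $c_{h,h'}=\sum_p \widehat w^{(p)}_{(h,h')}\widehat l_p\widehat l_p'$. Verifying the latter reduces, after regrouping the sum over $p$ back into a sum over $q$ times a sum over $q'$, to the original two decompositions, which closes the argument.
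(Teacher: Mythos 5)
Your proposal is correct and is essentially the paper's own proof: both expand the product $t_{i,j,h}\,t'_{i',j',h'}$ of the two given decompositions, set $u^{(qq')}_{i,i'}=u^{(q)}_iu'^{(q')}_{i'}$ (and similarly for $v$, $w$), and reindex the pairs $(q,q')$ to exhibit a rank-$rr'$ decomposition of $T\otimes T'$. The extra remarks on Kronecker-product factor matrices and the bilinear-algorithm reading are consistent elaborations, not a different argument.
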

\begin{proof}
Decompose the tensor $T\otimes T'$ by using the equations
$$t_{i,i',j,j',h,h'}=\Big (\sum_{q=1}^{r}u^{(q)}_iv^{(q)}_jw^{(q)}_h\Big )~
\Big (\sum_{q'=1}^{r'}u'^{(q)}_{i'}v'^{(q)}_{j'}w'^{(q)}_{h'}\Big )=
\sum_{q,q'=1}^{r,r'}u^{(qq')}_{i,i'}v^{(qq')}_{j,j'}w^{(qq')}_{h,h'}$$ where $u^{(qq')}_{i,i'}=u_i^{(q)}u'^{(q')}_{i'}$, 
$v^{(qq')}_{j,j'}=v^{(q)}_jv'^{(q)}_{j'}$, $w^{(qq')}_{h,h'}=w^{(q)}_hw'^{(q)}_{h'}$ for
 all 6-tuples $(i,i',j,j',h,h')$.
\end{proof}


\section{Bilinear MM and  the Associated Tensors}\label{sblnalgmm} 


The tensor $T$ associated with the problem MM$(k,m,n)$ has  entries with subscripts 
 represented by three pairs
of integers
$(i,i')$, $(j,j')$, and $(h,h')$, rather than by three
 integers $i$, $j$, and $h$:   
\begin{equation}\label{eqtnsrdlt}
T=(t_{(i,i'),(j,j'),(h,h')})_{i,i',j',j',h,h'=1}^{k,m,m,n,n,k},~
t_{(i,i'),(j,j'),(h,h')}= 
\delta_{i',j}~\delta_{j',h}\delta_{h',i}~{\rm  for~all}~i,i',j,j',h,~{\rm and}~h'.,
\end{equation}
Here and hereafter
\begin{equation}\label{eqdlt}
\delta_{q,q}=1,~\delta_{q,s}=0~{\rm if}~
q\neq s.
\end{equation}

We can represent a bilinear algorithm  of rank $r$
for  the computation of the matrix product $C=AB$  
 by the following equations: 
\begin{equation}\label{eqblnmm}
c_{i,h}=\sum_j a_{i,j} b_{j,h}=\sum_{q=1}^rw_{h,i}^{(q)}l_ql'_q~{\rm for}~i=1,\dots,k;~h=1,\dots,n.
\end{equation}
Here $l_q$ and $l_q'$ are linear forms 
in the entries of the matrices $A$ and $B$
(cf. (\ref{eqblnr})),
\begin{equation}\label{eqlnf}
l_q=l_q(A)=\sum_{i,j=1}^{k,m}u_{i,j}^{(q)}a_{i,j}~{\rm and}~ l'_q=l'_q(B)=\sum_{j,h=1}^{m,n}v_{j,h}^{(q)}b_{j,h},~
q=1,\dots,r,
\end{equation}
 and the algorithm is defined by a triple 
 of 3-dimensional tensors,
\begin{equation}\label{equvw}
U=\Big (u_{i,j}^{(q)}\Big )_{i,j,q=1}^{k,m,r},~
V=\Big (v_{j,h}^{(q)}\Big )_{j,h,q=1}^{m,n,r},~
W=\Big (w_{h,i}^{(q)}\Big )_{h,i,q=1}^{n,k,r}.
\end{equation}
We can rewrite the above expressions 
removing the links among the subscripts:
\begin{equation}\label{equvw'}
U=\Big (u_{i,i'}^{(q)}\Big )_{i,i',q=1}^{k,m,r},~
V=\Big (v_{j,j'}^{(q)}\Big )_{j,j',q=1}^{m,n,r},~
W=\Big (w_{h,h'}^{(q)}\Big )_{h,h',q=1}^{n,k,r}.
\end{equation}
Then simultaneous equations (\ref{eqblnmm}) and (\ref{eqlnf})
can be equivalently rewritten as follows
(cf. \cite{B70}),
\begin{equation}\label{eqbrnt}
\sum_{q=1}^r u_{i,i'}^{(q)}v_{j,j'}^{(q)}w_{h,h'}^{(q)}=
\delta_{i',j}~\delta_{j',h}~\delta_{h',i}~
{\rm for}~i,h'=1,\dots,k;i',j=1,\dots,m;j',h=1,\dots,n,
\end{equation} 

We can rewrite the tensor $T$
as 3-dimensional tensor by replacing every pair of subscripts by a single index, namely,
$(i,i')$ by $\bar i=i+mi'$ for $i=1,\dots,k$, $(j,j')$ by 
$\bar j=j+ni'$  for $j=1,\dots,m$,
and $(h,h')$ by $\bar h=h+kh'$  for $h=1,\dots,n$,
so that 
\begin{equation}\label{eqtnsrbr}
t_{(i,i'),(j,j'),(h,h')}=t_{\bar i,\bar j,\bar h}~{\rm for}~\bar i=1,\dots,km;\bar j=1,\dots,mn;\bar h=1,\dots,nk.
\end{equation}

 We can similarly write 
\begin{equation}\label{eqbrntbr}
\sum_{q=1}^r u_{\bar i}^{(q)}v_{\bar j}^{(q)}w_{\bar h}^{(q)}=
t_{\bar i,\bar j,\bar h}~
{\rm for}~\bar i=1,\dots,km;\bar j=1,\dots,mn;\bar h=1,\dots,nk.
\end{equation}

 
\section{Recursive Bilinear Algorithms for MM. Exponents of MM}\label{srbmm} 


The {\em tensor product construction}
of equation (\ref{eqtnsrprd})
provides useful insight into recursive algorithm for MM.
Given the problem MM$(k,m,n)$ of the computation of matrix product $C=AB$,
we can fix a triple of positive integers
$(k',m',n')$ substitute matrices of sizes $k'\times m'$,  $m'\times n'$, and 
 $k'\times n'$ for the entries of the matrices
 $A$, $B$,  and $C$, respectively, and arrive at the problem MM$(kk',mm',nn')$.
Equivalently we can define this problem
by its tensor, which is the product $T\otimes T'$ of 
the two tensors $T$ and $T'$ associated with 
the two problems  MM$(k,m,n)$
and  MM$(k',m',n')$, respectively. 
Apply
Theorem \ref{thtnsrprd} 
and obtain  
\begin{equation}\label{eqrnkprmm}
{\rm rank(MM}(kk',mm',nn'))\le
{\rm rank(MM}(k,m,n))\cdot{\rm rank(MM(}k',m',n')).
\end{equation}
By recursively applying  inequality (\ref{eqrnkprmm})
for $k=k'=m=m'=n=n'=2^i$, for $i=1,2,\dots$
we can bound the ranks in  
  recursive extensions of the algorithms of Examples \ref{ex0} and  \ref{ex1} for $2\times 2$ MM.

Next we  generalize the recursive processes based on Examples \ref{ex0} and \ref{ex1} --
we define {\em recursive bilinear  algorithms}
based on any  bilinear algorithm for MM$(n)$
of a fixed  $n$.  
\begin{theorem}\label{threcmm}
Given 
a bilinear 
algorithm of rank $r$ for  $n\times n$ MM, 
 one can perform $K\times K$ MM for all $K$ by
 using at most 
$c\cdot K^{\omega_{n,r}}$ scalar arithmetic operations
for a fixed $c$ independent of $K$
and for the exponent  $\omega_{n,r}=\log_n(r)$.
\end{theorem}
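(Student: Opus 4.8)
The plan is to prove the bound first for the special sizes $K=n^{d}$ with $d$ a nonnegative integer, by iterating the given rank-$r$ algorithm in the divide-and-conquer fashion already illustrated for Examples~\ref{ex0} and~\ref{ex1}, and then to reduce an arbitrary $K$ to this case by zero-padding. Throughout one assumes $n\ge 2$ (for $n=1$ the exponent $\omega_{1,r}$ is not even defined).

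First I would set up the recursion. Fix a rank-$r$ bilinear algorithm $\mathbb{BA}$ for $\mathrm{MM}(n)$, given by tensors $U,V,W$ of format $n^{2}\times n^{2}\times r$ as in~(\ref{equvw}). Besides its $r$ bilinear multiplications, $\mathbb{BA}$ performs a certain number $a$ of scalar additions, subtractions, and multiplications by constants when it forms the linear forms $l_{q},l_{q}'$ of~(\ref{eqlnf}) and the linear combinations~(\ref{eqblnmm}) of the products $l_{q}l_{q}'$; by the operation count of Section~\ref{sblnalg} the quantity $a$ is bounded by $(\alpha+\beta+\gamma)r+(\alpha-1)r+(\beta-1)r+(r-1)\gamma$ with $\alpha=\beta=\gamma=n^{2}$, hence depends only on $n$, $r$, and $\mathbb{BA}$, not on $d$. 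Now run $\mathbb{BA}$ with the scalar entries of the two input matrices replaced by $n^{d-1}\times n^{d-1}$ blocks: each of the $r$ bilinear products $l_{q}l_{q}'$ becomes an instance of $\mathrm{MM}(n^{d-1})$, while each of the $a$ scalar operations becomes the corresponding operation on $n^{d-1}\times n^{d-1}$ matrices, i.e.\ $n^{2(d-1)}$ scalar operations. Writing $M(d)$ for the number of scalar arithmetic operations this recursive algorithm uses on $\mathrm{MM}(n^{d})$, we obtain
\begin{equation*}
M(d)\le r\,M(d-1)+a\,n^{2(d-1)}\quad(d\ge 1),\qquad M(0)=1.
\end{equation*}

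Next I would solve this recurrence. Unrolling it gives
\begin{equation*}
M(d)\le r^{d}+a\,n^{2(d-1)}\sum_{i=0}^{d-1}\Big(\frac{r}{n^{2}}\Big)^{i}.
\end{equation*}
Since the rank of any bilinear algorithm for $\mathrm{MM}(n)$ exceeds $n^{2}$ for $n\ge 2$ (a standard lower bound on the rank of the matrix multiplication tensor), we have $q:=r/n^{2}>1$, so $\sum_{i=0}^{d-1}q^{i}\le q^{d}/(q-1)$ and therefore $M(d)\le r^{d}\,\big(1+\tfrac{a}{(q-1)n^{2}}\big)=c_{1}r^{d}$ with $c_{1}$ independent of $d$. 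Because $r^{d}=(n^{d})^{\log_{n}r}=K^{\omega_{n,r}}$ when $K=n^{d}$, this settles the case of powers of $n$. Finally, for arbitrary $K$ put $d=\lceil\log_{n}K\rceil$, so that $K\le n^{d}\le nK$, and pad the two inputs with zero rows and columns to size $n^{d}\times n^{d}$; the top-left $K\times K$ block of the product of the padded matrices is the desired product $C$, and applying the algorithm above to the padded matrices costs at most $M(d)\le c_{1}r^{d}=c_{1}(n^{d})^{\omega_{n,r}}\le c_{1}(nK)^{\omega_{n,r}}=c\,K^{\omega_{n,r}}$ with $c=c_{1}\,n^{\omega_{n,r}}$ independent of $K$. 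The only point requiring genuine care is the bookkeeping in the first step: one must check that the non-multiplicative work of $\mathbb{BA}$ is a fixed constant $a$, so that the "combining" cost at recursion depth $d$ is only $O(n^{2(d-1)})$ per level and the geometric series $\sum_{i}(r/n^{2})^{i}$ stays dominated by its last term; everything after that is routine.
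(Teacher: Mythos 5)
Your proof is correct and follows essentially the same route as the paper: recursively substitute $n^{d-1}\times n^{d-1}$ blocks for the scalar entries, observe that each of the fixed number of linear operations costs $n^{2(d-1)}$ scalar operations per level, and bound the total by a geometric series dominated by $r^{d}$. You merely make explicit two details the paper leaves implicit -- the use of $r>n^{2}$ to sum the series and the zero-padding that handles $K$ not a power of $n$ -- so no further changes are needed.
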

\begin{proof}
Substitute $n\times n$ matrices for variables,  re-apply the algorithm recursively, and in $d$  steps,  for any $d$, arrive at a  bilinear 
algorithm of rank $r^d$ for 
 $n^{d}\times n^{d}$ MM. Then recall that a linear operation of multiplication of a $q\times q$ matrix by a scalar as well as an addition or subtraction of a pair of $q\times q$ matrices can be performed in $q^2$ scalar arithmetic operations and deduce that  the arithmetic cost of performing all linear operations involved in the algorithm stays within the claimed bound.
 \end{proof}
By minimizing $\omega_{n,r}=\log_n(r)$ over the
 ranks $r$ of 
all bilinear algorithms for  
 $n\times n$ MM define
\begin{equation}\label{eqexpnr0}
\omega_{n}=\min_r~~\omega_{n,r}.
\end{equation}
Then, by minimizing $\omega_{n,r}$ over all 
integers $n$ not exceeding a fixed integer $K$,
 define the exponent
 \begin{equation}\label{eqexpfsbl}
\omega_{\le K}=\min_{n\le K} \omega_{n}.
\end{equation} 
For $K=\infty$ obtain
 the {\em universal or theoretical exponent of MM}
\begin{equation}\label{eqexpninf}
\omega=\inf_{n\le \infty} \omega_{n}. 
\end{equation}

Here and hereafter (except for Section \ref{simplsymbint})  we consider MM over the fields of real and complex numbers, but
the presented algorithms can be defined 
over other fields and rings as well
and in some cases (to a more limited extent) over semirings (see 
\cite{DP09}, \cite{Y09}, \cite{LG12}, 
and the bibliography therein). Over the fields 
the theoretical exponent $\omega$ only depends on the field characteristic  
\cite[Theorem 2.8]{S81}, while
the hidden overhead constants can vary greatly even 
over the fields having the same characteristic.



\section{To the exponent 2.78 by means of trilinear aggregation}\label{stragexp}

  
Breaking Strassen's barrier of $\log_2(7)\approx 2.8074$ for
$\omega$  was considered to be almost in hands in 1969, but this goal of
``literally all the leading researchers in the field, worldwide" 
 has remained a dream for almost a decade.
 
If one could build a recursive process on
the algorithm of Example \ref{exw0},
 then the  dream would have come true even well before Strassen's discovery of \cite{S69}. Indeed the  algorithm of this example has rank $r=n^3/2 +n^2$
for any even $n$, e.g., has rank $r=144$ for $n=6$.
Substitute these data into the equation $\omega=\log_n(r)$ 
and obtain 
$$\omega\le \log_6(144)\approx 2.7737<\log_2(7)\approx 2.8074.$$ 
 Theorem \ref{threcmm}, however,
cannot be applied here because MM is not commutative, and so the
substitution of matrices for the variables $a_i$ and  
$b_j$ would have  violated the basic identities of Example \ref{exw0}.
For example, we cannot apply the equation $v_{22}a_{22}=u_{22}v_{22}$  
if  $u_{22}$ and $v_{22}$ are matrices, 
e.g., if $u_{22}=\begin{pmatrix}1&0\\0&0\end{pmatrix}$ and 
$v_{22}=\begin{pmatrix}0&1\\0&0\end{pmatrix}$
because
$$
\begin{pmatrix}1&0\\0&0\end{pmatrix}~
\begin{pmatrix}0&1\\0&0\end{pmatrix}=\begin{pmatrix}0&1\\0&0\end{pmatrix}\neq
\begin{pmatrix}0&1\\0&0\end{pmatrix}~\begin{pmatrix}1&0\\0&0\end{pmatrix}=\begin{pmatrix}0&0\\0&0\end{pmatrix}.$$
The algorithm  of Example \ref{exw0}, based on (\ref{eqcmmm}), belongs to the class 
of {\em commutative  bilinear}  or {\em quadratic} algorithms. They only differ from 
 non-commutative bilinear algorithms at the stage of nonlinear multiplications $l_ql_q'$, $q=1,\dots,r$; they multiply 
pairs of linear forms 
 $l_q=l_q({\bf a},{\bf b})$ and 
 $l_q'=l_q'({\bf a},{\bf b})$ in the coordinates of both input vectors ${\bf a}$ and ${\bf b}$, but this difference turned out to be crucial when we try to apply the algorithm recursively. 

 Strassen's record of \cite{S69} would have fallen if one performed  $2 \times 2$ MM by using six bilinear multiplications,
but  
 \cite{HK69}, \cite{HK71}, \cite{BD78}
proved  that 7 is the sharp lower bound on the rank of 
 $2\times 2$ MM.  

Actually 15 is the sharp lower bound
on the number of additions and subtractions in all bilinear algorithms 
of rank 7 for  $2\times 2$ MM (cf. \cite{P76}, \cite{B95}), and moreover the following theorem defines explicit expressions
for all bilinear algorithms of rank 7 for  $2\times 2$ MM
appeared in \cite[Theorem 3]{P72} (see also \cite{dG78} 
and \cite[Theorem 0.3]{P14}). 

Two bilinear algorithms, both of rank $r$,  for the same problem of MM$(k,m,n)$
(see (\ref{equvw})),  defined by 
two triples $\{U,V,W\}$ and $\{\overline {U},\overline{V},\overline{W} \}$, respectively,
are said to be {\em equivalent} 
to one another if 
$$\overline{u}_{i,j}^{(q)}=
\Sigma_{v,\kappa}\sigma_{i,v} \nabla_{j,\kappa} u_{v,\kappa}^{(t(q))},~\overline{v}_{g,h}^{(q)}=
\Sigma_{v,\kappa}\lambda_{v,g} \mu_{h,\kappa} v_{v,\kappa}^{(t(q))},~{\rm and}~\overline{w}_{l,q}^{(q)}=
\Sigma_{v,\kappa}\gamma_{v,l}\beta_{\kappa, q}w_{v,\kappa}^{(t(q))},$$
where the matrices in the three pairs $(\sigma_{i,v})$ and $(\gamma_{v,l} ),(\nabla_{j,\kappa})$ and $(\lambda_{v,g} )$,
and $(\mu_{h,\kappa})$ 
and $(\beta_{\kappa, q} )$ are the inverses of one another; $1\leq t(s)  \leq r$; 
$t_{q_{1}} \neq t_{q_{2}}$ if $q_{1} \neq {q_{2}}$, and all $t(q)$ are integers. 
\begin{theorem}\label{th3} 
Every bilinear algorithm of rank 7 for  $2\times 2$ MM is
equivalent to the algorithms of Examples \ref{ex0} and \ref{ex1}.
\end{theorem}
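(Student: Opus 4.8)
First I would reformulate the problem as a classification of all rank-$7$ decompositions of the MM$(2)$ tensor $T$ of (\ref{eqtnsrdlt})--(\ref{eqbrnt}) modulo the stated equivalence --- that is, modulo the simultaneous changes of basis in the three matrix slots (the matrix pairs $\sigma,\nabla$; $\lambda,\mu$; $\gamma,\beta$ consisting of mutual inverses) together with permutations of the seven rank-one summands. The natural setting is the trace form: writing the $2\times2$ slices of the three tensors of (\ref{equvw}) as matrices $U^{(q)},V^{(q)},W^{(q)}$, condition (\ref{eqbrnt}) says, for a suitable identification of the slices with $2\times2$ matrices, that $\mathrm{tr}(XYZ)=\sum_{q=1}^7\mathrm{tr}(U^{(q)}X)\,\mathrm{tr}(V^{(q)}Y)\,\mathrm{tr}(W^{(q)}Z)$ for all $2\times2$ matrices $X,Y,Z$. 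In this form the equivalence group is transparent: the substitution $X\mapsto PXR^{-1}$, $Y\mapsto RYQ^{-1}$, $Z\mapsto QZP^{-1}$ preserves $\mathrm{tr}(XYZ)$ and carries a decomposition to the one with $U^{(q)}\mapsto R^{-1}U^{(q)}P$ and the cyclic analogues for $V^{(q)},W^{(q)}$, so the stated equivalence is precisely the orbit relation under this $\mathrm{GL}_2\times\mathrm{GL}_2\times\mathrm{GL}_2$-action together with permutations of the seven terms; moreover the trace identities $\mathrm{tr}(XYZ)=\mathrm{tr}(YZX)=\mathrm{tr}(Z^\top Y^\top X^\top)$ let us cyclically permute or transpose the three slots, which is useful for normalization.

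First I would establish a shape lemma pinning down, up to the group action, what the seven triples $(U^{(q)},V^{(q)},W^{(q)})$ can be. Two ingredients drive it. (i) Every factor is nonzero, and in each slot the seven factors span the $4$-dimensional space of $2\times2$ matrices --- otherwise a common zero $A_0\neq0$ of all the left forms $l_q$ would give $A_0B=0$ for every $B$, and similarly in the other slots. (ii) Contracting the $Z$-slot against a rank-one matrix $Z_0=xy^\top\neq0$ turns $\mathrm{tr}(XYZ)$ into the bilinear form $(X,Y)\mapsto y^\top XYx$, which has rank exactly $2$, whereas an invertible $Z_0$ yields rank $4$; on the decomposition side this contracted form is $\sum_q(y^\top W^{(q)}x)\,U^{(q)}\otimes V^{(q)}$, a sum of rank-one bilinear forms constrained to have rank exactly $2$ for every rank-one $Z_0$. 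Exploiting this rank-$2$-versus-rank-$4$ dichotomy over the quadric cone of singular $Z_0$ --- and its two analogues obtained by contracting the $X$- and $Y$-slots --- forces the rigid conclusion: after permuting the seven terms, in each slot exactly one factor is invertible while the other six have rank $1$, and the lines spanned by the columns, and by the rows, of these rank-one factors fall into a prescribed configuration in $\mathbb{P}^1$. The hard part will be exactly this: converting the contraction-rank data into the precise ``six rank-one and one rank-two per slot, coherently matched across the three slots'' combinatorics is the substance of de Groote's argument \cite{dG78}, it is where the identity $7=2^3-1$ is genuinely used, and it requires a not-short case analysis.

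Given the shape lemma, I would finish by normalization and a finite computation. Since the unique invertible factor in a slot can be carried by $X\mapsto R^{-1}XP$ to the identity, move the ``special'' term to $U^{(q_0)}=V^{(q_0)}=W^{(q_0)}=I$ (absorbing an irrelevant scalar into the rescaling of that one product); this term is exactly the product $(a_{11}+a_{22})(b_{11}+b_{22})$ of Example \ref{ex0}. The residual freedom is the diagonal $\mathrm{GL}_2$ acting by simultaneous conjugation on all the factors, together with permutations of the remaining six terms, and conjugation is enough to move the row and column lines of the six rank-one factors onto the standard lines prescribed by the shape lemma. With everything so normalized, the equation $\sum_{q=1}^7\mathrm{tr}(U^{(q)}X)\mathrm{tr}(V^{(q)}Y)\mathrm{tr}(W^{(q)}Z)=\mathrm{tr}(XYZ)$ collapses to a small polynomial system in the few surviving scalar parameters, whose only solutions are the triples written out in \cite[Theorem 3]{P72} --- in particular the ones underlying Examples \ref{ex0} and \ref{ex1}, which one verifies directly to be equivalent to each other. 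Solving that final system, and assembling the successive normalizations into one explicit equivalence transformation while keeping track of term orderings and scalar normalizations, is routine bookkeeping.
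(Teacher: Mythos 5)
First, a point of calibration: the paper does not actually prove Theorem \ref{th3} -- it states the result and attributes the proof to \cite[Theorem 3]{P72}, with \cite{dG78} and \cite[Theorem 0.3]{P14} as further sources. So your attempt can only be judged as a standalone argument, and as such it has a genuine gap. The framework you set up is sound and is indeed the standard one (trace form $\mathrm{tr}(XYZ)$, the sandwiching action of three invertible matrix pairs as the stated equivalence, rank of each factor as an invariant of that action, spanning of each slot, and the rank-$2$ versus rank-$4$ dichotomy of the contractions $\mathrm{tr}(XYZ_0)$ over singular versus invertible $Z_0$). But the pivotal step -- that these contraction constraints force every rank-$7$ decomposition, after permutation of terms, to have exactly one rank-$2$ factor and six rank-$1$ factors per slot, coherently matched across the three slots and with a prescribed configuration of row and column lines -- is precisely the content of the theorem's hard case analysis, and you do not carry it out: you explicitly defer it to ``the substance of de Groote's argument.'' Likewise, the concluding claim that after normalization the remaining parameters satisfy ``a small polynomial system whose only solutions are'' the known triples is asserted, not executed. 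Since the entire difficulty of the classification lives in exactly these two places, what you have is a correct and well-informed plan of proof rather than a proof.

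Two smaller cautions if you do carry the plan out. First, rank of a factor is invariant under the sandwiching equivalence, so the ``one invertible plus six rank-one per slot'' shape must be derived for an arbitrary rank-$7$ decomposition directly from the contraction data; it cannot be imported from Examples \ref{ex0} and \ref{ex1}, or the argument becomes circular. Second, the equivalence defined just before Theorem \ref{th3} consists only of the three mutually inverse matrix pairs together with a permutation of the seven products; the cyclic-slot and transposition symmetries of $\mathrm{tr}(XYZ)$ that you propose to use ``for normalization'' are not part of that relation, so any normalization step relying on them must in the end be re-expressed (or shown to be removable) within the stated group, on pain of proving equivalence only in a larger symmetry group than the theorem asserts.
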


We refer the reader to 
the paper 
 \cite{MR14} for
the current record lower bounds on the 
rank of $MM(n)$ for all $n$, to  the papers
 \cite{HK69}, \cite{HK71},  \cite[Theorem 1]{P72}, 
\cite{BD73},   \cite{BD78},   
 \cite{B89},  \cite{B99},  \cite{B00},
 \cite{RS03}, \cite{S03}, and \cite{L14}
for some earlier work in  this direction, and to the papers
\cite{DIS11} and \cite{S13} for various lower and upper bounds
on the arithmetic complexity and the ranks of rectangular MM of smaller sizes.
 
Since the study of MM(2) could not help  decrease the exponent $\log_2(7)$, 
the researchers tried to devise a bilinear algorithm of rank 21 for  MM(3) because 
$\log_3(21)<\log_2(7)\approx 2.8074$.
This turned out to be hard, and we still cannot
perform  $3 \times 3$ MM by using less than 23 bilinear multiplications. 
 
The exponent $\log_2(7)\approx 2.8074$
was decreased only in 1978, when
the paper \cite{P78} presented 
a bilinear algorithm of rank 143,640 for $70\times 70$ MM. This  implied 
the exponent $\omega=\log_{70} (143,640)<2.7962$
for $MM(n)$ and consequently for $n\times n$ matrix inversion, Boolean $MM(n)$, 
and various other well-known computational problems  linked to MM and partly listed in Section \ref{sblnpex}.

The paper \cite{P78} has
extended some novel techniques of 
the paper \cite{P72} of 1972,  
published in Russian\footnote{Until 1976 the  author of \cite{P66} and \cite{P72} lived in the Soviet Union. 
From 1964 to 1976 he has been working
 in Economics in order
to make his living
and has written the papers \cite{P66} and \cite{P72} in his spare time.}
and translated into English only in 2014
in \cite{P14}. Namely the paper  \cite{P72}
has accelerated the straightforward MM by combining
 {\em trilinear interpretation} of bilinear algorithms
and the {\em aggregation} method. By following \cite{P78}
we call this combination
{\em trilinear aggregation}
and  briefly cover it in the next two sections.
By refining 
 trilinear aggregation of
 \cite{P72}
the papers \cite{P78},  \cite{P79},  \cite{P80a},  \cite{P81}, and
 \cite{P82} proposed various algorithms that
 further accelerated MM. In particular
 the paper  \cite{P82} yielded   
  the exponent 
\begin{equation}\label{eqexprlst}
\omega_{44}\le 2.7734,
\end{equation} 
  and this
  still remains the record  exponent $\omega_{\le K}$ for feasible MM. 
  

The technique of trilinear aggregation  has 
been recognized for its impact
on the decrease of the MM exponent,
but the paper \cite{P72} was also a 
landmark in the study of multilinear and tensor decompositions.
Such decompositions  introduced by Hitchcock
in 1927  received little attention except for
a minor response in  1963--70 with half of a dozen
papers  in  
 the psychometrics literature. 
The paper \cite{P72} of 1972 
 provided the earliest known application of nontrivial 
multilinear and tensor decomposition to fundamental matrix computations,
now a popular flourishing area in
linear and multilinear algebra with a wide range of 
important applications to modern computing 
(see  \cite{T03}, \cite{KB09},  \cite{OT10}, 
\cite{GL13},
and the bibliography therein).


\section{Trilinear Representation and 
Dual Bilinear 
Algorithms}\label{strld} 

 
Trilinear representation of  a bilinear algorithm enables  transparent demonstration of the technique of trilinear aggregation. Otherwise it  is equivalent and quite similar to its tensor representation.

Let a bilinear algorithm
of rank $r$ be represented by 
equations  (\ref{eqblnr}).
Multiply them by new variables $d_h$,
 sum the products in $h$, and arrive at 
 the following
  representation of the algorithm
  as a decomposition of a trilinear 
  form,
\begin{equation}\label{eqtrlntns}
\sum_{i,j,h=1}^{k,m,n} t_{i,j,h}a_{i} b_{j}d_{h}=
\sum_{s=1}^r l_q({\bf a})l'_q({\bf b})l''_q({\bf d})
\end{equation} 
for
 $l_q=l_q({\bf a})$ and $l'_q=l'_q({\bf b})$ of (\ref{eqblnr}), $l_q''=l''_q({\bf d})=\sum_{h=1}^{n}w_{h}^{(q)}d_{h}$, and  $q=1,\dots,r$.
By equating the coefficients of the variables
$d_h$ on both sides of this trilinear decomposition  we come back to the original 
bilinear representation  (\ref{eqblnr}) of the same algorithm, and we can obtain its two alternative dual bilinear representations by 
equating the coefficients of the variables 
$a_{i}$ and $b_j$ instead.

 Here is a simple example of the trilinear representation of the bilinear algorithm of Example \ref{excmplpr}.

\begin{example}\label{extrcopr} 
{\em A trilinear
decomposition 
of rank 3 for 
 multiplication
of two complex numbers.} 
$$a_1b_1d_1-a_2b_2d_1+a_1b_2d_2+a_2b_1d_2=
a_1b_1(d_1-d_2)-a_2b_2(d_1+d_2)+(a_1+a_2)(b_1+b_2)d_2.$$
\end{example}
  
  By equating the coefficients of the variables
$d_h$ on both sides we come back to the bilinear algorithm of Example \ref{excmplpr}.
 By equating the coefficients of  
 $a_1$ and $a_2$  on both sides of this equations
or alternatively the coefficients of 
 $b_1$ and $b_2$
  on their both sides, we arrive at two alternative bilinear algorithms of rank 3 for computing the product of two complex numbers. They are close to one another but not to the algorithm of Example \ref{excmplpr}. We display just one of the two.
  
  \begin{example}\label{exblncopr2} 
{\em A distinct
trilinear
decomposition 
of rank 3 for 
 multiplying
two complex numbers.} 

$l_1=b_1$, $l_1'=d_1-d_2$,
 
$l_2=b_2$, $l_2'=d_1+d_2$,

$l_3=b_1+b_2$, $l_3'=d_2$,

 $b_1d_1+b_2d_2=l_1l_1'+l_3l_3'$,

$b_1d_2-b_2d_1=l_3l_3'-l_2l_2'$.
\end{example}

 The book \cite{W80} demonstrates the power of 
duality technique in devising some
efficient bilinear algorithms
for FIR-filters and multiplication of complex numbers and polynomials.

In the rest of this section we apply and extend the above discussion to 
the special case of the  algorithms   for the problem
 MM$(k,m,n)$ of multiplying two matrices 
 $A=(a_{i,j})_{i,j=1}^{k,m}$ and $B=(b_{j,h})_{j,h=1}^{m,n}$. We 
  can represent such an algorithm by means of the following trilinear decomposition,
\begin{equation}\label{eqtrc}
{\rm Trace}(ABD)=\sum_{i,j,h} a_{i,j} b_{j,h}d_{h,i}=
\sum_{s=1}^r l_q(A)l'_q(B)l''_q(D)~{\rm for}~
l_q''=l''_q(A)=\sum_{h,i=1}^{n,m}w_{h,i}^{(q)}d_{h,i},
\end{equation} 
 $l_q$ and $l_q'$ of (\ref{eqlnf}), and 
 $q=1,\dots,r$.
Here $D=(d_{h,i})_{h,i=1}^{n,k}$
is an auxiliary  $n\times k$ matrix and 
 Trace$(M)=\sum_i m_{i,i}$ denotes the trace of a matrix $M=(m_{i,j})_{i,j}$.

\begin{example}\label{extrmm2}
{\em A trilinear vesion of Strassen's bilinear algorithm of Example \ref{ex0} 
 for $MM(2)$.}  \\
$~~~~~~~~~ \sum_{i,j,h=1}^2a_{i,j}b_{j,h}d_{h,i}=\sum_{s=1}^7l_sl_s'l_s''$,~
$l_1l_1'l_1''=(a_{11}+a_{22})(b_{11}+b_{22})(d_{11}+d_{22})$, \\
$l_2l_2'l_2''=(a_{21}+a_{22})b_{11}(d_{21}-d_{22})$,
$l_3l_3'l_3''=a_{11}(b_{12}-b_{22})(d_{12}+d_{22})$,
$l_4l_4'l_4''=(a_{21}-a_{11})(b_{11}+b_{12})d_{22}$,\\
$l_5l_5'l_5''=(a_{11}+a_{12})b_{22}(d_{12}-d_{11})$,
$l_6l_6'l_6''=a_{22}(b_{21}-b_{11})(d_{11}+d_{21})$,
$l_7l_7'l_7''=(a_{12}-a_{22})(b_{21}+b_{22})d_{11}$.
\end{example}

We can come back to 
the original bilinear algorithm of Example
\ref{ex0}
for  the $2\times 2$ matrix product $AB$ by equating 
the coefficients of the variables $d_{h,i}$ on both sides of a trilinear
decomposition. By equating the coefficients of 
the variables $a_{i,j}$ also on both sides or alternatively of $b_{j,h}$ on both sides,
we can obtain two dual bilinear algorithms.
In this case the three dual algorithms differ little from each other,
but let us be given a bilinear algorithm of a rank $r$ for rectangular MM$(k,m,n)$. 
Then we arrive at the dual algorithms of rank $r$ for the problems  $MM(n,m,k)$, and $MM(k,n,m)$ as well
(cf.\ \cite[part 5 of Theorem 1]{P72},  \cite{BD73},
\cite{HM73}, \cite{P74}).
A bilinear algorithm  for 
MM$(k,m,n)$ can be readily extended to three other dual algorithms of the same rank 
for the  three other problems  MM$(m,k,n)$,
 $MM(n,m,k)$, and
 $MM(k,n,m)$ because we can interchange the subscripts of the variables. 


 The number of  linear  operations
  (unlike the rank) can differ in the transition among the three dual algorithms, and this can possibly be exploited  for minimizing this number.
  

Here is a useful combination of duality with tensor product construction.
Given a trilinear decomposition  of rank $r$ 
for  the problem of MM$(k,m,n)$, obtain that the dual problems of MM$(m,n,k)$ and 
 MM$(n,k,m)$ also have rank $r$.
Apply bound (\ref{eqrnkprmm}) and obtain that
the problem MM$(kmn,kmn,kmn)$ has rank at most $r^3$.
Now apply Theorem \ref{threcmm} for $n$ replaced by $kmn$ and obtain the following  result, which first appeared as
 claim 1 of \cite[Theorem 1]{P72}.
 \begin{theorem}\label{thdual} 
Given a bilinear or trilinear algorithm 
of rank $r$ for   $MM(k,m,n)$
and any 4-tuple of integers $k$, $m$, $n$, and $r$ such that $kmn>1$,
one can perform $MM(K)$ by using 
$cK^{\omega}$
arithmetic operations for any $K$, for
 $\omega=\omega_{k,m,n,r}=3\log_{kmn}(r)$,
and for a constant $c$ independent of $K$.
\end{theorem}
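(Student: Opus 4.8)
The plan is to chain together three facts already available in the excerpt: the cyclic duality of the MM tensor, the submultiplicativity of MM rank under tensor product (inequality~(\ref{eqrnkprmm}), itself a consequence of Theorem~\ref{thtnsrprd}), and the recursive construction of Theorem~\ref{threcmm}.

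First I would pass to the trilinear representation~(\ref{eqtrc}): a rank-$r$ bilinear algorithm for $MM(k,m,n)$ is literally the same datum as a rank-$r$ decomposition of the trilinear form $\mathrm{Trace}(ABD)=\sum_{i,j,h}a_{i,j}b_{j,h}d_{h,i}$. Since $\mathrm{Trace}(ABD)=\mathrm{Trace}(BDA)=\mathrm{Trace}(DAB)$, equating the coefficients of the $a$-variables or of the $b$-variables (rather than always the $d$-variables) turns the same rank-$r$ decomposition into rank-$r$ algorithms for $MM(m,n,k)$ and $MM(n,k,m)$ as well; this is exactly the duality illustrated around Example~\ref{extrmm2}. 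Hence each of the three problems $MM(k,m,n)$, $MM(m,n,k)$, $MM(n,k,m)$ has rank at most $r$.

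Next I would form the tensor product of the three associated tensors. Each has the Kronecker-delta shape~(\ref{eqtnsrdlt}), and a direct index check shows that the product of the tensor of $MM(k,m,n)$ with that of $MM(m,n,k)$ is the tensor of $MM(km,mn,nk)$, and a further product with the tensor of $MM(n,k,m)$ is the tensor of $MM(kmn,kmn,kmn)=MM(kmn)$: the point is simply that the three ``matrix dimensions'' $k$, $m$, $n$ each occur once in each of the three slots of the composite, so every slot multiplies out to $kmn$. Applying inequality~(\ref{eqrnkprmm}) twice then gives $\rank(MM(kmn))\le r^3$.

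Finally I would invoke Theorem~\ref{threcmm} with its ``$n$'' taken to be $N:=kmn$ and its ``$r$'' taken to be $r^3$. Since $kmn>1$ we have $N\ge 2$, so the theorem provides, for every $K$, an algorithm for $K\times K$ MM using at most $cK^{\omega_{N,r^3}}$ scalar arithmetic operations with $c$ independent of $K$, where $\omega_{N,r^3}=\log_N(r^3)=3\log_{kmn}(r)=\omega_{k,m,n,r}$, which is the asserted bound. The only step that is more than bookkeeping is the identification of the triple tensor product with $MM(kmn)$: one must arrange the three factors in the correct cyclic order, because $MM(a,b,c)\otimes MM(a',b',c')=MM(aa',bb',cc')$ and we need $k,m,n$ distributed so that each product slot becomes $kmn$. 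Pairing $MM(k,m,n)$ with $MM(m,n,k)$ and $MM(n,k,m)$ does precisely this, so this obstacle is mild and reduces to a one-line verification with the Kronecker deltas of~(\ref{eqtnsrdlt}).
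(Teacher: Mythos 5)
Your argument matches the paper's own proof step for step: the paper likewise uses the cyclic duality of the trilinear form $\mathrm{Trace}(ABD)$ to obtain rank-$r$ algorithms for $MM(m,n,k)$ and $MM(n,k,m)$, applies inequality (\ref{eqrnkprmm}) to bound the rank of $MM(kmn)$ by $r^3$, and then invokes Theorem \ref{threcmm} with $n$ replaced by $kmn$, giving $\omega=3\log_{kmn}(r)$. Your write-up is correct and only adds the explicit Kronecker-delta verification of the slot bookkeeping, which the paper leaves implicit.
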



\section{Trilinear Aggregation and Disjoint MM}\label{stragg}


Aggregation  technique is well-known in 
business, economics, computer science, telecommunication,
natural sciences, medicine, and statistics. 
The idea is to  mass together or cluster numerous independent but similar units
into much fewer aggregates. Then the study is simplified
but is supposed to
characterize all the  units
either directly or via 
disaggregation techniques.
Such aggregation/disaggregation processes
 in \cite{MP80} served as a springboard 
 for the emergence of 
the field of {\em Algebraic Multigrid},  now quite popular. 
 
Aggregation/disaggregation techniques are behind
the acceleration of MM in Example~\ref{exw0},
 preceded by similar application of this
technique to polynomial evaluation with preprocessing of 
coefficients \cite{P66}, \cite{K97}. 
The papers \cite{P72} and  \cite{P78} 
apply
aggregation in order to compress the decomposition
of the trilinear form ${\rm Trace}(ABC)$
by playing with the shared  subscripts of distinct variables.
Other implementations of this technique appeared
 in \cite{P79},   \cite{P80},  
 \cite{P81}, \cite{P82}, and \cite{LPS92}.
 
 For demonstration of these techniques, consider
{\em disjoint MM} of computing two independent matrix products $AB$ and $UV$ 
represented by the trilinear form
$${\rm Trace}(ABD+UVW)=\sum_{i,j,h=1}^{k,m,n}(a_{i,j}b_{j,h}d_{h,i}+u_{j,h}v_{h,i}w_{i,j}).$$
For $k=m=n$, we would seek a pair of disjoint $n\times n$ matrix products, which is quite a realistic task  
in computational practice.

For each triple $i,j,h$ define the aggregate $(a_{i,j}+u_{j,h})(b_{j,h}+v_{h,i})(d_{h,i}+w_{i,j})$ of two monomials $a_{i,j}b_{j,h}d_{h,i}$  and $u_{j,h}v_{h,i}w_{i,j}$ and let
$$T=\sum_{i,j,h=1}^{k,m,n}(a_{i,j}+u_{j,h})(b_{j,h}+v_{h,i})(d_{h,i}+w_{i,j})$$
denote the sum of the $kmn$
aggregates.
 Let 
 $$T_1=\sum_{i,j=1}^{k,m}a_{i,j}s_{i,j}w_{i,j},~ 
T_2=\sum_{j,h=1}^{m,n}u_{j,h}b_{j,h}r_{j,h},~ {\rm and}~T_3=\sum_{h,i=1}^{n,k}q_{i,h}v_{h,i}d_{h,i}$$
denote three groups of  correction terms
where $$q_{i,h}=\sum_{j=1}^{m}(a_{i,j}+u_{j,h}),~
s_{i,j}=\sum_{h=1}^{n}(b_{j,h}+v_{h,i}), 
{\rm and}~r_{j,h}=\sum_{i=1}^{k}(d_{h,i}+w_{i,j}).$$
Then the equation 
\begin{equation}\label{eq2aggr}
{\rm Trace}(ABD+UVW)=T-T_1-T_2-T_3
\end{equation} defines
a trilinear decomposition
 of rank $mnp+mn+np+pm$ (versus the rank $2mnp$  of the  straightforward algorithm).
Table~\ref{tabaggr2} displays this decomposition in compressed form.

\begin{table}[ht] 
\caption{Aggregation/disaggregation of a pair of MM terms.}
\label{tabaggr2}
  \begin{center}
\begin{tabular}{| c | c |c |}
      \hline
 $a_{i,j}$ & $y_{j,h}$ & $d_{h,i}$  \\ \hline
 $u_{j,h}$ &  $v_{h,i}$   & $w_{i,j}$ \\ \hline
 \end{tabular}
\end{center}
\end{table}

Sum the two entries in each column of the  table, multiply the three products together, and obtain an aggregate. 
Multiply  together
the three entries in each row of the  table and obtain the two output terms 
$a_{i,j}b_{j,h}d_{h,i}$
and $u_{j,h}v_{h,i}w_{i,j}$.
The cross-products of other triples 
of the table define six correction terms.
Their sum over all $n^3$ triples of indices 
$i,j$, and $h$ has rank $2(km+mn+nk)$.
By  
subtracting this sum from the 
sum of all $kmn$ aggregates,
we decompose $2kmn$ terms of ${\rm Trace}(ABD+UVW)$
 into the sum of $kmn+2(km+mn+nk)$
terms. 
For $m=n=p=34$ this implies a decomposition of 
rank $n^3+6n^2$ for a pair of disjoint $MM(n)$, versus the rank $2n^3$ of the straightforward decomposition.
 
Demonstration of 
the power  of trilinear aggregation
can be made more transparent
 for disjoint MM, whose natural link to 
trilinear aggregation 
has  been shown in \cite{P84},  \cite[Section 5]{P84a},
 \cite[Section 12]{P84b}, and \cite{LPS92}.
The known constructions 
 for pairs  of disjoint 
 $n\times n$ MM, however,
have been  extended to a single $MM(n)$ for even $n$. 
In particular
the paper \cite{P72} presented
a trilinear decomposition of rank $0.5 n^3+3n^2$ for $MM(n)$ and any even $n$
similar to 
the above decomposition of ${\rm Trace}(ABD+UVW)$.
This implied the $MM$ exponent $\log_n (0.5n^3+3n^2)$,
which is less than 2.85 for  $n=34$. 
 
The paper \cite{P78} presented
a trilinear decomposition
 of rank $(n^3-4n)/3+6n^2$ 
for  $MM(n)$, $n=2s$, and any positive integer $s$.
For $n=70$ this defines the MM exponent  2.7962.
Then again it is convenient to demonstrate 
this design for disjoint MM 
associated with a decomposition 
of the trilinear form ${\rm Trace}(XYZ+UVW+ABD)$.
The basic step is the  
aggregation/disaggregation defined by Table~\ref{tabaggr3}.

\begin{table}[ht] 
\caption{Aggregation/disaggregation of a triple of MM terms.}
\label{tabaggr3}
  \begin{center}
\begin{tabular}{| c | c |c |} \hline
 $x_{i,j}$ & $b_{j,h}$ & $z_{h,i}$  \\ \hline
 $u_{j,h}$ &  $v_{h,i}$   & $w_{i,j}$ \\ \hline
 $a_{h,i}$   & $b_{i,j}$ & $d_{j,h}$ \\ \hline
 \end{tabular}
\end{center}
\end{table}

Sum the $kmn$
aggregates 
$$(x_{i,j}+u_{j,h}+a_{h,i})(y_{j,h}+v_{h,i}+b_{i,j})(z_{h,i}+w_{i,j}+d_{j,h}),$$
subtract order of $n^2$ correction terms,  and
obtain a decomposition of rank $n^3+O(n^2)$
for $${\rm Trace}(XYZ+UVW+ABD),$$
versus the rank  $3n^3$ of the straightforward decomposition.
The trace represents 
three disjoint problems of $MM(n)$, that is,
the computation of
the three independent
$n\times n$ matrix products $XY$, $UV$,  and $AB$,
and we obtain a trilinear decomposition of rank $n^3+O(n^2)$
for this task.

With a little more work 
one obtains a similar 
trilinear decomposition of \cite{P78}
 of rank $(n^3-4n)/3+6n^2$, for any even $n$, and this implied 
the bound $\omega_{70}<2.7962$.
Refinements of this construction implied
smaller upper bounds (see Tables \ref{tab2} and \ref{tab2a}). In particular the algorithm of  \cite{P82} yielded the bound   
$\omega_{44}\le 2.7734$
of (\ref{eqexprlst}).


\section{Any Precision Approximation (APA)  Algorithms}\label{sapa}

  
Based on the following table we arrive at the technique of {\em Any Precision Approximation},\footnote{Hereafter we use the acronym {\em APA}.} which is quite efficient for fast symbolic MM  and for other symbolic algebraic computations.  
  
\begin{table}[ht] 
\caption{Aggregation/disaggregation of a pair of MM terms for Any Precision Approximation MM.}
\label{tabaggrapa}
  \begin{center}
\begin{tabular}{| c | c |c |}
      \hline
 $a_{i,j}$ & $b_{j,h}$ & $\lambda^2 d_{h,i}$  \\ \hline
 $\lambda u_{j,h}$ &  $\lambda v_{h,i}$   & $w_{i,j}$ \\ \hline
    \end{tabular}
\end{center}
\end{table}
For $\lambda=1$ this table  turns into 
 Table  \ref{tabaggr2} but for variable  
$\lambda$ helps us demonstrate the APA
technique    
 of  \cite{BCLR79},  \cite{BLR80}, and \cite{BCLR81}.
  

Let $\lambda\rightarrow 0$ and 
obtain  trilinear decomposition 
\begin{equation}\label{eqapa}
 {\rm Trace}(ABD+UVW)=\lambda^{-2}(S-T_1-T_2+O(\lambda)),
\end{equation}
where
$$S=\sum_{i,j,h=1}^{k,m,n}(a_{i,j}+\lambda u_{j,h})(b_{j,h}+
\lambda v_{h,i})(\lambda^2 d_{h,i}+w_{i,j}),$$ is the  sum 
of $kmn$ aggregates,

 $$T_1=\sum_{i,j=1}^{k,m}a_{i,j}q_{i,j}w_{i,j},~ {\rm and}~
T_2=\sum_{j,h=1}^{m,n}u_{j,h}b_{j,h}r_{j,h}$$
for
$$q_{i,j}=\sum_{h=1}^{n}(b_{j,h}+\lambda v_{h,i})~
{\rm and}~ r_{j,h}=\sum_{i=1}^{k}(\lambda^2 d_{h,i}+w_{i,j}).$$ 
The terms of order $\lambda$
vanish as $\lambda\rightarrow 0$.
Counting only the remaining monomials
on the right-hand side of (\ref{eqapa}),
we define the
 {\em  border rank} of the decomposition.
It is equal to $kmn+km+kn$, versus the larger rank
 $kmn+km+kn+mn$ of decomposition (\ref{eq2aggr}). 

Generally, given a  trilinear form $T$
(e.g., given by the 3-dimensional tensor of its coefficients), multiply it by 
$\lambda^d$ for a fixed nonnegative 
integer $d$ and define a
trilinear decomposition of the trilinear form 
$\lambda^d\cdot T$ with coefficients being polynomials in 
$\lambda$. Delete the terms of order $\lambda^{d+1}$
and higher  and call the number of the remaining terms the
 border rank of the decomposition and of the associated APA algorithm. 
 The minimal border rank  over all such APA algorithms for a fixed trilinear  form $\lambda^d\cdot T$ define its border rank.
All this can be readily restated for a set of bilinear forms replacing a single trilinear form $T$.

We can
equate the coefficients of the variables $d_{h,i}$
and $w_{i,j}$ in the trilinear APA decomposition above and arrive at the bilinear problem of the evaluation of 
two disjoint matrix products 
$AB$ and $UV$. The $kmn$  
trilinear aggregates turn into the $kmn$  
 bilinear  products
 $(a_{i,j}+\lambda u_{j,h})(b_{j,h}+
\lambda v_{h,i})$ for all $i$, $j$, and $h$.
Clearly, $$a_{i,j}+\lambda u_{j,h}\rightarrow a_{i,j}~ {\rm and}~b_{j,h}+
\lambda v_{h,i} \rightarrow b_{j,h}
~ {\rm as}~\lambda \rightarrow 0,$$  
but we must keep the terms   
$\lambda u_{j,h}$ and $\lambda v_{h,i}$
in the aggregates
in order to compute the matrix   $UV$.

This forces us to double the precision of the  
representation of
the multiplicands   
 $a_{i,j}+\lambda u_{j,h}$
and $b_{j,h}+
\lambda v_{h,i}$
compared to the precision of the representation
of the entries 
 $a_{i,j}$, $u_{j,h}$, $b_{j,h}$, and
$v_{h,i}$.

For example, suppose that $\lambda=2^{-s}$
for a sufficiently large integer $s$ and that
$a_{i,j}$, $u_{j,h}$,
 $b_{j,h}$, and
$v_{h,i}$ are $s$-bit integers in the range $[0,2^s)$.
Then $2s$ bits
are required in order to represent
each of the multiplicands   
 $a_{i,j}+\lambda u_{j,h}$
and $b_{j,h}+
\lambda v_{h,i}$.
If $s$ exceeds  the  
half-length of the computer word,
then using $2s$ bits in APA algorithms
would move us beyond the length of the computer word, 
e.g.,  beyond the IEEE standard double precision.

This can be  
costly in numerical computations, but
APA is  valuable in
symbolic computation where efficient tchniques such as Chinese remainder algorithm and $p$-adic lifting facilitate computations with long numbers (cf. our Section \ref{simplsymbint}).

The papers \cite{BCLR79},  \cite{BLR80}, and \cite{BCLR81} study border rank of MM and various other fundamental bilinear  computational
problems and show that  border rank is  quite frequently  smaller than their rank.


\section{Bini's  Construction}\label{sapaexp}


In the community of the Theory of Computing
APA algorithms 
 have been mostly and highly appreciated  as a tool for
decreasing the record upper bounds on
the theoretical exponent 
 $\omega$  of (\ref{eqexpninf}). 
Their power in this context is due to Bini's theorem in \cite{B80},
according to which APA decomposition of border
rank $r$ 
 and bilinear algorithms 
of rank $r$, both  for the same bilinear problem, 
define the same upper estimate 
for the exponent $\omega$. 
 Bini's argument demonstrates generally fruitful idea of operating with matrix 
polynomials and finally recovering some scalar matrices of their coefficients. Historically this became the springboard for the derivation of the MM  technique of  EXPAND, 
PRUNE,  and CONQUER.

Next we outline 
 Bini's argument. 

Multiply 
trilinear decomposition (\ref{eqapa}) of 
${\rm Trace}(ABD+UVW)$
 by 
the  variable $\lambda$
 and arrive at a decomposition of
$\lambda\cdot {\rm Trace}(ABD+UVW)$ 
 whose
coefficients are polynomials in $\lambda$ of degree at most $d=2$.
 Interpolate to ${\rm Trace}(ABD+UVW)$ from $d+1$ values of the polynomial.
The interpolation  increases the
 rank  by a factor of at most $(d+1)^2$,
that is, by  at most a factor of 9 
for $d=2$,\footnote{By using FFT we can interpolate to a polynomial from its values at $K$th roots of unity
by using $1.5K\log_2K+K$ arithmetic operations provided that $K=2^k>d$ for a positive integer $k$ \cite[Theorem 2.2.2]{P01}; for $d=k=2$, this implies  interpolation  factor 14.} 
and the resulting rank $9\theta\cdot (mkn+mk+kn)$ 
for a constant $\theta>1$ greatly exceeds the rank $2mkn$ of the straightforward algorithm
for ${\rm Trace}(ABD+UVW)$.  

We can overcome this deficiency, however, if
we recursively extend an APA algorithm to 
$MM(n^{2^q})$ for $q=0,1,\dots$. At the 
$q$th recursive step substitute $n^{2^q}\times n^{2^q}$ matrices for the input entries and observe that this squares both dimension $N$ of the $MM(N)$ and  border rank $br(N)$ but only doubles
the degree of the decomposition in $\lambda$. Hence in $q$ recursive steps 
$${\rm r}(N)\le {\rm INT}_q\cdot {\rm br}(N),~{\rm INT}_q=2^q(d+1)^2,~{\rm br}(N)=({\rm br}(n))^{2^q}$$
where ${\rm r}(N)$ denotes  rank,  
${\rm br}(N)$ and ${\rm br}(n)$ denote border ranks,
 $N=n^{2^q}$, and ${\rm INT}_q$
 is the  growth factor  of the border
 rank in its transition to the rank
 in $q$ recursive steps. 
Therefore
$$\omega_{N,{\rm br}(N)}\le \log_{N}({\rm r}(N))\le
\log_{N}({\rm INT}_q\cdot {\rm br}(N))=
\log_{n^{2^q}}({\rm INT}_q\cdot {\rm br}(n)^{2^q})=\log_n(({\rm INT}_q)^{1/2^q}{\rm br}(n)).$$
Observe that
$({\rm INT}_q)^{1/2^q}\rightarrow 1$ as 
$q\rightarrow \infty$, recall  bound (\ref{eqexpninf}), 
 and deduce Bini's estimate
$$\omega\le \lim_{q\rightarrow \infty}\omega_{N,{\rm br}(N)}\le
\lim_{q\rightarrow \infty}\log_{n}({\rm br}(n)).$$

The argument is readily extended to
APA of rectangular MM, that is,
  $\omega\le 3\log_{kmn}({\rm br}(k,m,n))$ 
where ${\rm br}(k,m,n)$ denotes the border rank of MM$(k,m,n)$
 for a fixed triple of
  $k$, $m$, and $n$.


\section{Sch{\"o}nhage's Construction. EXPAND, PRUNE,  and \\ CONQUER Algorithms}\label{sapaexpsch}


The above APA decomposition for 
${\rm Trace}(ABD+UVW)$ is associated with disjoint MM rather than MM,
but Sch{\"o}nhage in \cite{S81} has
extended Bini's theorem by proving that
the theoretical exponent $\omega$ of MM can be bounded  based on
the APA decomposition for  disjoint MM as follows.
\begin{theorem}\label{thsch} \cite{S81}.
The theoretical exponent $\omega$ of MM in (\ref{eqexpninf}) 
does not exceed  $3\tau$ if
there exists a bilinear or trilinear algorithm  of  rank $r$ or 
an 
APA algorithm of  border rank $r$
 for $s$ disjoint MM
problems of sizes $(k_i,m_i,n_i)$, for $i=1,\dots,s$,
such  that $\sum_{i=1}^s (k_im_in_i)^{\tau}=r$.
\end{theorem}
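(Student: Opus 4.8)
The plan is to combine three ingredients already developed in the excerpt: (a) Bini's degeneration/interpolation argument (Section \ref{sapaexp}), which lets us replace ``border rank'' by ``rank'' at no cost in the exponent after passing to the limit in recursive powers; (b) the tensor product construction of Theorem \ref{thtnsrprd} together with the duality relation of Theorem \ref{thdual}, which turns a rank bound for a disjoint MM configuration into a rank bound for a single large square MM; and (c) a combinatorial counting step that extracts the ``best'' term from a high tensor power of the disjoint sum. First I would reduce to the bilinear case: by Bini's argument, if an APA algorithm of border rank $r$ exists for the given disjoint family, then for every $\epsilon>0$ there is an exact bilinear algorithm of rank at most $(1+\epsilon)r$ for a suitable recursive power of that disjoint family, so it suffices to prove the theorem assuming we already have an exact bilinear algorithm of rank $r$ for $\bigoplus_{i=1}^s \mathrm{MM}(k_i,m_i,n_i)$ with $\sum_i (k_im_in_i)^\tau = r$.

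Next I would take the $N$-th tensor power of the disjoint-sum tensor $T=\bigoplus_{i=1}^s \langle k_i,m_i,n_i\rangle$. By Theorem \ref{thtnsrprd} this has rank at most $r^N$, and as a tensor it decomposes as a direct sum, over all sequences $(i_1,\dots,i_N)\in\{1,\dots,s\}^N$, of the MM tensors $\langle \prod_t k_{i_t},\ \prod_t m_{i_t},\ \prod_t n_{i_t}\rangle$. Grouping sequences by how many times each index value $i$ occurs, i.e.\ by a composition $N=e_1+\cdots+e_s$, we get $\binom{N}{e_1,\dots,e_s}$ disjoint copies of $\mathrm{MM}\big(\prod_i k_i^{e_i},\prod_i m_i^{e_i},\prod_i n_i^{e_i}\big)$. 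Since a single disjoint summand has rank no larger than the whole direct sum, each such MM block has rank at most $r^N$, and combining $M$ disjoint copies of a problem of rank $\rho$ never costs more than $M\rho$ (indeed the direct sum of $M$ copies has rank $\le M\rho$ trivially), we also may — but need not — use the multiplicity. The key point is just that for the single representative composition $(e_1,\dots,e_s)$ we obtain
\[
\mathrm{rank}\Big(\mathrm{MM}\big(\textstyle\prod_i k_i^{e_i},\ \prod_i m_i^{e_i},\ \prod_i n_i^{e_i}\big)\Big)\ \le\ r^N .
\]

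Now I would feed this into Theorem \ref{thdual}: a rank-$\rho$ algorithm for $\mathrm{MM}(k,m,n)$ with $kmn>1$ yields $\omega \le 3\log_{kmn}(\rho)$. Applying it with $k=\prod_i k_i^{e_i}$, etc., and $\rho=r^N$ gives
\[
\omega\ \le\ \frac{3N\log r}{\log\!\big(\prod_i (k_im_in_i)^{e_i}\big)}\ =\ \frac{3N\log r}{\sum_i e_i\log(k_im_in_i)} .
\]
The remaining freedom is the choice of the composition $(e_1,\dots,e_s)$ (as $N\to\infty$ we may take $e_i/N$ to approximate any prescribed probability vector $(p_1,\dots,p_s)$). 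So $\omega \le 3\log r / \sum_i p_i\log(k_im_in_i)$ for every probability vector $p$; optimising, I would choose $p_i$ proportional to $(k_im_in_i)^\tau$. With that choice, writing $P=\sum_i (k_im_in_i)^\tau = r$, one computes $\sum_i p_i\log(k_im_in_i) = P^{-1}\sum_i (k_im_in_i)^\tau\log(k_im_in_i)$, while $\log r = \log P$; a short calculation (this is exactly where the exponent $\tau$ is pinned down) shows that the bound $3\log r/\sum_i p_i\log(k_im_in_i)$ equals $3\tau$ precisely when $r=\sum_i(k_im_in_i)^\tau$, because $\tau\sum_i p_i\log(k_im_in_i) = \sum_i p_i\log((k_im_in_i)^\tau)$ and the latter is $\log r$ minus the (nonnegative) Kullback–Leibler–type discrepancy which vanishes at the optimal $p$. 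Hence $\omega\le 3\tau$.

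The main obstacle — and the part needing genuine care rather than bookkeeping — is the passage from border rank to rank at the level of a \emph{disjoint} family while keeping the disjointness structure intact through the recursion, i.e.\ making Bini's argument of Section \ref{sapaexp} interact correctly with the tensor-power/counting step: one must check that the interpolation factor $\mathrm{INT}_q$ still washes out in the limit when it is applied to the $N$-th power of a direct sum, and that the ``pick one favourable summand'' operation is legitimate for approximate (border-rank) decompositions, not only exact ones. Equivalently, one can sidestep this by invoking Bini's theorem first at the level of the whole disjoint tensor and only afterwards taking powers and selecting the summand; I would present it in that order to keep the two limiting processes from entangling. The rest — Theorem \ref{thtnsrprd}, Theorem \ref{thdual}, and the convexity computation fixing $\tau$ — is then routine.
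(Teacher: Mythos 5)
There is a genuine gap, and it sits exactly at the point you wave off with ``we also may --- but need not --- use the multiplicity.'' If you only use the fact that a single summand of the $N$-th power has rank (or border rank) at most $r^N$, then with $e_i\approx Np_i$ and $p_i=(k_im_in_i)^\tau/r$ your bound is
\[
\omega\ \le\ \frac{3\log r}{\sum_i p_i\log(k_im_in_i)}\ =\ \frac{3\tau\,\log r}{\log r-H(p)},\qquad H(p)=-\sum_i p_i\log p_i ,
\]
because $\tau\sum_i p_i\log(k_im_in_i)=\sum_i p_i\log(p_i r)=\log r-H(p)$. The quantity you call a ``Kullback--Leibler--type discrepancy which vanishes at the optimal $p$'' is in fact the entropy $H(p)$, which is strictly positive whenever $s\ge 2$ and at your choice of $p$ is as large as it gets; so your route proves only the weaker bound $3\tau\log r/(\log r-H(p))>3\tau$. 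Concretely, for the paper's example of the disjoint pair $\mathrm{MM}(7,1,7)$, $\mathrm{MM}(7,7,1)$ with border rank $63$, the theorem gives $\omega\le 3\log_{49}31.5<2.66$, while your argument gives only $3\log_{49}63\approx 3.19$. Also note that the trivial direction you cite (``$M$ disjoint copies of a rank-$\rho$ problem have rank $\le M\rho$'') is the wrong one: what is needed is that $M$ disjoint copies computed jointly within rank $r^N$ behave, for exponent purposes, like a single copy of rank $r^N/M$, and that is not a bookkeeping fact.

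This missing step is precisely what the paper's proof supplies first: for $s$ disjoint MM problems of \emph{equal} size $n$ and border rank $r$, it builds a self-reproducing recursion (substitute matrices and re-use the algorithm on each $s$-tuple of bilinear products) that after $q$ steps yields border rank $(r/s)^qs$ for $s$ disjoint copies of $\mathrm{MM}(n^q)$; pruning to one copy and applying Bini's theorem kills the factor $s$ as $q\to\infty$ and gives $\omega\le 3\log_{n^3}(r/s)$ --- i.e.\ the effective rank per copy is $r/s$, not $r$. The general case then proceeds essentially as in your tensor-power/multinomial-grouping step, but the group of $T=\binom{N}{e_1,\dots,e_s}$ equal-size disjoint copies is fed into that lemma, giving $\omega\le 3\log_R(KMN)$ with $R=r^N/T$; since $T\approx e^{NH(p)}$ up to polynomial factors, $\log R\approx N(\log r-H(p))=N\tau\sum_i p_i\log(k_im_in_i)$, and the entropy deficit in your calculation is exactly cancelled, yielding $3\tau$. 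Your Bini reduction and the use of Theorems \ref{thtnsrprd} and \ref{thdual} are fine, but without the equal-size disjoint lemma (or an equivalent way to divide the rank by the multiplicity) the proof does not reach the stated bound.
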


The theorem has interesting interpretation
in terms of the following {\em Direct Sum Conjecture} (first stated and then retracted by V. Strassen): 
{\em the rank of $s$ disjoint MM problems of sizes $(k_i,m_i,n_i)$, for $i=1,\dots,s$, is not less than $\sum_{i=1}^sr_i$ where $r_i$ denotes  the rank of the problem} 
MM$(k_i,m_i,n_i)$.

This conjecture is still open but becomes wrong if in its statement border ranks replace ranks. Indeed decomposition (\ref{eqapa}) for 
disjoint pair of MM$(k,m,n)$ and MM$(m,n,k)$ has border rank $knm+km +kn$, that is,
$mn+m+n$ for $k=1$. One can readily prove that
the border rank of each of MM$(1,m,n)$ and MM$(m,n,1)$ is $mn$ and then observe that
  $2mn>mn+m+n$ for $m=n>2$.\footnote{This  counter-example to the conjecture appeared in \cite{S81}; a little more involved one, based on  an APA-variant of the decomposition of Table \ref{tabaggr3}, appeared in \cite{P79}.}
  
  Nevertheless  Theorem \ref{thsch} can be equivalently stated as follows: {\em The  
  bound of  Theorem \ref{thsch} on the exponent $\omega$ of (\ref{eqexpninf})  
  cna be immediately verified if
 the border rank version of the Direct Sum Conjecture held true.} In \cite{S81}
 Sch{\"o}nhage proved the theorem 
 without assuming that the Conjecture is true. 
His proof is the 
simplest in the case where 
$k_i=m_i=n_i=n$, for $i=1,\dots,s$, and $s$ divides $r$, that is, where we are given an APA bilinear decomposition of  border rank $r=gs$ for $s$ disjoint problems  ${\rm MM}(n)$ and an integer $g$. Then substitute $n\times n$ 
matrices for the variables, reapply the algorithm for every $s$-tuple of bilinear multiplications,
obtain an APA bilinear algorithm of border rank
$r\cdot r/s=(r/s)^2s$ for 
$s$ disjoint problems of MM$(n^2)$,
and extend this process recursively.
$q$ recursive steps define an APA bilinear algorithm of border rank  $(r/s)^qs$ for  
$s$ disjoint problems of MM$(n^q)$. 
Prune the input keeping just a 
single problem MM$(n^q)$, apply  Bini's theorem,  and deduce from (\ref{eqexpninf}) that
$$ \omega\le \log_{(n^q)}\Big (s\cdot \Big (\frac{r}{s}\Big )^q\Big )=
\log_{n}\Big (\frac{r}{s}\Big )+
\frac{1}{q}\log_n(s).$$
For $q\rightarrow \infty$  obtain 
 Sch{\"o}nhage's bound 
$\omega\le \log_{n}(r/s)$ in this special case.

We can quite readily relax the assumption that $s$
divides $r$;    
furthermore by proceeding similarly to the proof of 
Theorem \ref{thdual}  
 we yield  extension to the case of
 $s\cdot{\rm MM}(k,m,n)$, where we are given an APA bilinear algorithm of a border rank $r$ for $s$ disjoint problems of rectangular ${\rm MM}(k,m,n)$.

Let us extend these results to the pair of disjoint problems of MM$(k,m,n)$ and 
MM$(m,n,k)$   represented in Tables \ref{tabaggr2} and \ref{tabaggrapa}. Let a basic APA algorithm have a border rank $r$.
Then $q$ recursive steps define an APA algorithm of border rank $r^q$
for $2^q$ disjoint MM problems. Grouping together the MM problems of the same size we 
obtain $q+1$ disjoint groups of MM problems 
$T_i\cdot(k^im^{q-i},m^in^{q-i},n^ik^{q-i})$
with binomial coefficients
$T_i=\begin{pmatrix}q\\i \end{pmatrix}$
for $i=0,1,\dots,q$.
Choose even $q$,  write $q=2s$ and
$(K,M,N)=(k^sm^s,m^sn^s,n^sk^s)$,
 and prune the disjoint MM
  keeping only the term
 $T_s\cdot (K,M,N)=\begin{pmatrix}2s\\s \end{pmatrix}\cdot(k^sm^s,m^sn^s,n^sk^s)$; restrict
 the given decomposition  
 of border rank $r^{2s}$ to this term.
  In this special case we have already proved  Theorem \ref{thsch}, and so we obtain
 $$\omega\le 3\log_R((kmn)^{2s})=
  3\log_{R^{1/2s}}(kmn),~{\rm for }~
 R={r^{2s}/\begin{pmatrix}2s\\s \end{pmatrix}}.$$
Since 
  $$\begin{pmatrix}2s\\s \end{pmatrix}^{(1/2s)}\rightarrow 2~{\rm as}~s\rightarrow \infty$$
  it follows that
   $$R^{1/2s}\rightarrow r/2~{\rm as}~s\rightarrow \infty,~{\rm and~so}~
\omega\le 3\log_{r/2}(kmn),$$ 
which proves the theorem in this special case.

In the general case of $s$ disjoint MM
problems of various sizes $(k_i,m_i,n_i)$, for $i=1,\dots,s$, the same construction and the same proof techniques work. We again perform $q$ recursive steps and arrive at a decomposition of border rank $r^q$ for disjoint MM made up of the terms MM$(K,M,N)$ where
$K=\prod_{i=1}^s k_i^{d_i}$,
$M=\prod_{i=1}^sm_i^{d_i}$,  $N=\prod_{i=1}^sn_i^{d_i}$,  
and $d_1,\dots,d_s$ range 
over all 
$s$-tuples of nonnegative integers 
summed to $q$. 

Group all MM problems of the same size together into the terms of the form 
$T\cdot~{\rm MM}(K,M,N)$ with $T$ denoting 
the coefficients of multinomial expansion.
 Prune the decomposition to each of these terms
 (that is, delete all other terms),  define its APA decomposition of border rank $r^q$.
 Then for every term 
$T\cdot~{\rm MM}(K,M,N)$
we can obtain an upper bound 
 $\omega\le 3 \log_R(KMN)$ for $R=r^q/T$. 
 Maximize these bounds over all such terms, 
 let $q\rightarrow \infty$, and arrive at 
   the claimed bound of Theorem \ref{thsch}
   in the general case.

The name of EXPAND, PRUNE,  and CONQUER
is more adequate for these techniques
than the traditional DIVIDE and CONQUER,
and similarly for the derivation of Bini's bound if we view the interpolation to a single term as pruning. 


\section{Faster Decrease of the Exponent of MM
by Using the EXPAND, PRUNE, and CONQUER
Techniques}\label{sfstdcr}


EXPAND, PRUNE, AND CONQUER techniques
 enabled significant decrease of the exponent $\omega$.
Already the above APA decomposition of ${\rm Trace}(ABD+UVW)$
implies the  upper bound 
$$\omega\le 3\log_{mkn}(0.5(kmn+km+kn))$$ for any triple
of $k$, $m$, and $n$.
Indeed for $k=n=7$ and $m=1$, we arrive
at APA bilinear algorithm of border rank 63 for the pair of disjoint problems of 
MM(7,1,7) and MM(7,7,1) -- of computing  
the outer product of two vectors of dimension 7 and of the product of 
$7\times 7$ matrix by a vector, respectively. Apply Theorem \ref{thsch}
 and obtain 
 \begin{equation}\label{eqdsjapa}
\omega\le 3\log_{49}31.5<2.66.
\end{equation}

Refinement of this construction in  \cite{S81} 
yielded the record estimate
$\omega<2.548$, and \cite{P81} promptly decreased this record bound to $\omega<2.522$
by means of combining APA technique
and trilinear aggregation                                            
 for disjoint MM
 represented by ${\rm Trace}(ABC+UVW+XYZ)$.
Further record of 2.496 was soon established in  
\cite{CW82}.
By continuing this line of research, the  papers \cite{S86}
and \cite{CW90} decreased the upper estimates for the
theoretical exponent of MM  below 2.479 and 2.376, respectively. 

The MM algorithms of these papers 
began with APA decomposition for  disjoint MM of small sizes.
It is instructive to compare the initial trilinear identities in \cite{S86}
and \cite{CW90}
with the decomposition for disjoint MMs 
defined by Tables \ref{tabaggr2}, 
 \ref{tabaggr3}, and \ref{tabaggrapa}.
According to \cite[page 255]{CW90}, ``Strassen used the following trilinear identity, related to $\dots$
trilinear aggregation of \cite{P78}:"
\begin{align*} 
\sum_{i=1}^q\Big (x_0^{[2]}+\lambda x_i^{[1]}\Big )\Big (y_0^{[1]}+\lambda y_i^{[2]}\Big )(z_i
\lambda^{-1}) -x_0^{[2]}y_0^{[1]}\sum_{i=1}^qz_i=\sum_{i=1}^q (x_i^{[1]}y_0^{[1]}z_i+x_0^{[2]}y_i^{[2]}z_i)+O(\lambda).
\end{align*}
This defined a basic APA algorithm of border rank $q+1$ for a pair of block inner products.
 
\cite{CW90} strengthened this construction by 
proposing the following basic APA algorithm of border rank $q+2$ for a triple of block inner products:
\begin{align*}  
\sum_{i=1}^q\lambda^{-2}\Big (x_0^{[0]}+\lambda x_i^{[1]}\Big )\Big (y_0^{[0]}+\lambda y_i^{[1]}\Big )\Big (z_0^{[0]}+\lambda z_i^{[1]}\Big )- ~~~~~~~~~~~~~~~\\  
\lambda^{-3}\Big (x_0^{[0]}+\lambda^2 x_i^{[1]}\Big )\Big (y_0^{[0]}+\lambda^2 y_i^{[1]}\Big )\Big (z_0^{[0]}+\lambda^2 z_i^{[1]}\Big ) +
(\lambda^3-q\lambda^2)x_0^{[0]}y_0^{[0]}z_0^{[0]}=\\
\sum_{i=1}^q (x_0^{[0]}y_i^{[1]}z_i^{[1]}+x_i^{[1]}y_0^{[0]}z_i^{[1]}+x_i^{[1]}y_i^{[1]}z_0^{[0]})+O(\lambda).~~~~~~~~~~~~~~
\end{align*}
 
In both papers \cite{S86} and 
 \cite{CW90} the derivation 
of new record upper bounds on the exponent $\omega$
 from the simple basic designs above required long sophisticated recursive processes and intricate pruning based on amazing and advanced mathematical arguments. Actually 
 paper \cite{CW90}  deduced ``only" the record bound 
$\log_8(4000/27)<2.40364$ from the above design, but then proposed some
extended and more involved 
initial designs and
decreased the bound to the famous barrier  of 2.376. This record
 was only beaten by 0.002 in 2010 \cite{S10} and then by additional 0.001 in 2012--2014 \cite{VW14}. The challenge of reaching the exponent 2 is still open --
 in 2018 the record bound is about 2.3728639 \cite{LG14}. Moreover the study in \cite{AFLG14} showed that  the power of the  extension of the approach of  \cite{CW90}  in the directions of \cite{S10}, \cite{VW14}, and 
 \cite{LG14} is limited, and so  
the decrease of the exponent below 2.37 should require some new dramatically different ideas and techniques. Likewise it was proven in \cite{ASU13} 
 that the group-theoretical approach of \cite{CU03}, \cite{CKSU05} to the acceleration of MM,
 initially considered highly promising for achieving MM in nearly 
 quadratic time,
 must include some new dramatically different ideas and  techniques in order to
 produce any competitive MM algorithm.  
  

\section{Some Impacts of the Study of Fast MM}\label{sfrth1}


The progress in decreasing the exponent $\omega$ towards its lower bound 2 has been essentially in stalemate  for the last three decades, both for feasible MM at the level of about 2.7724 and for unfeasible MM at the level of about 2.38.
 that direction, which has virtually stalled after 1986. In spite of that disappointment
 we believe that overall the study of fast MM was already a success story.
 
\begin{itemize}
\item
Within less than two decades (by 1987)
the straightforward upper bound 3 on
 the MM exponent of
(\ref{eqexpninf}) decreased more than half-way to its lower bound 2 (see  details in Appendix \ref{smmexp}). 
\item
In order to achieve this progress
researchers have found and  revealed new surprising resources and have
developed amazing novel   
techniques,
all built on the top of each other,
involving sophisticated 
combinations of
 trilinear aggregation, 
APA algorithms, disjoint MM,  and
 EXPAND, 
PRUNE, and CONQUER techniques.
\item
The study of fast MM was highly important for the Theory of Computing -- the exponent $\omega$ is one of the most cited quantities in that large field because
progress in its estimation can be immediately extended to a great variety of well-known and intensively studied computational problems, partly listed in Section \ref{sblnpex}.
\item
Besides its impact on the Theory, the progress in decreasing the exponent $\omega$ strengthened the effort of researchers for the reduction of various 
computational problems to MM; such a reduction can be efficient even where the straightforward MM is applied.
\item
Some fast algorithms for feasible MM have been devised, developed, and implemented.  Now they make a valuable part of modern software for both numerical and symbolic computations. Even  limited progress in this direction is valuable because
MM is a fundamental operation of modern computations in Linear and Multilinear Algebra, while
 Polynomial MM makes major impact on  the field of Polynomial Algebra (cf. 
\cite{AHU74}, \cite{BM75}, \cite{BP94}, \cite{BCS97}, \cite{K97}, \cite{GG13}).
\end{itemize}

Next we recall sample by-products of the study of fast MM and of its methodological impact.

\begin{itemize}
\item
Although the origin of the field of fast Algebraic 
Computations can be traced back to \cite{O54}, \cite{M55}, 
 \cite{T55}, and \cite{P66},
the studies of fast MM
 in  \cite{W68},
\cite{P72}, \cite{BD73}, \cite{HM73}, \cite{S73}, \cite{P74},
\cite{P76}, \cite{P78}, and \cite{P84}
as well as APA algorithms in \cite{B80} and
\cite{BCLR79}  
have greatly motivated the effort and the progress in that field.
\item
The duality technique of \cite{P72}
for generating 
new efficient bilinear algorithms,
with  applications shown in \cite{W80} is
a valuable by-product of the MM research.
\item
The MM paper
\cite{P72}
was pioneering in demonstrating the power of the 
application of tensor decomposition to  matrix computations, now a thriving and highly popular area.
\end{itemize}

Finally, in contrast to reasonable pessimism of many experts about current perspectives for further substantial decrease of the exponent $\omega$, the acceleration of feasible MM is a highly promising and dynamic area and, together with the implementation issues, is the main subject of the rest of our survey.


\section{The Curse of Recursion and Fast Feasible MM}\label{scurs}

 
 Already in 1981 it has become clear that
the  progress in decreasing the theoretical  exponent of MM  is going to be
 separated  from 
 the acceleration
of feasible MM. 
Arnold Sch{\"o}nhage
has concluded the introduction to 
his seminal paper \cite{S81} of 1981 as follows:
``Finally it must be stressed, however, that so far all these new results are 
mainly of theoretical interest. The point of intersection with Strassen's
method\footnote{Actually also with the straightforward MM.} lies beyond any practical matrix size, and $\dots$
Pan's estimates of 1978 for moderate values of $n$ are  
still unbeaten".  Sch{\"o}nhage's account can be extended to the subsequent algorithms supporting record estimates for the exponent $\omega$,
except that 
the  estimate 2.7962 of 1978 for the
exponents of feasible MM 
has successively been decreased 
(although by  
small margins) in \cite{P79},
 \cite{P80},  \cite{P81}, and  \cite{P82},  
based purely on trilinear aggregation. 
 By  
2018 the  estimate $w_{44}<2.7734$ of  \cite{P82} 
 still remains the record upper bound on the exponents of  feasible MM,
unbeaten since 1982.  31 years later A.V. Smirnov in \cite{S13} came very close to this record
  by  applying advanced computer-aided search: one of his algorithm supports an exponent below 2.7743
 for $MM(54)$ (see Section \ref{sprspcaid}). 


Tables \ref{tab2} and \ref{tab2a} 
trace the progress in estimating the record  exponents of feasible MM.
The overhead constants associated  with the exponents are reasonably  small
because the
supporting algorithms 
avoid  recursive application of nested block MM and
rely just on trilinear aggregation. This progress in the acceleration of feasible MM 
was much slower than the progress in breaking records for the theoretical  exponent $\omega$, which is no surprise --  the powerful resource of using unlimited recursive processes had to be excluded
for devising  algorithms for feasible MM.  
  
In contrast the techniques of EXPAND, PRUNE, and CONQUER that supports  Bini's and  
Sch{\"o}nhage's theorems as well as the derivation of all known exponents below 2.7733 involve long recursive processes, 
and so  the associated algorithms remain inferior
to the straightforward MM until the problem size is blown up and becomes immense.   Due to such a {\em curse of recursion} 
 all these record breaking works   
 had no relevance to feasible MM
 of today, tomorrow, or  foreseeable future.

\begin{table}[h] 
\caption{Complexity Exponents of Feasible  MM}
\label{tab2}
  \begin{center}
    \begin{tabular}
{| c |c | c| c |c| c|}
      \hline
Exponent  & 2.8074 & 2.7962  & 2.7801 & 2.7762 & 2.7734   \\ \hline
Reference & \cite{S69}   & \cite{P78} & \cite{P80} & \cite{P81} &   \cite{P82}  \\ \hline
Year  &   1969   & 1978 & 1979 & 1981 &   1982  \\ \hline
    \end{tabular}
\end{center}
\end{table}

\begin{table}[h] 
\caption{Ranks and Complexity Exponents of  Feasible MM.}
\label{tab2a}
  \begin{center}
    \begin{tabular}
{ |c | c| c |c| }
      \hline
year  & paper &  rank of $MM(n)$ &  bound on exponent   \\ \hline
1969  &  \cite{S69}   & $n^{2.8074}$ & $\omega_{2}< 2.8074$  \\ \hline
 1978   & \cite{P78} & $(n^3+18n^2-4n)/3$ & $\omega_{70}< 2.7952$   \\ \hline
 1980   & \cite{P80} & $(2n^3+27n^2-2n)/6$ & $\omega_{48}< 2.7802$   \\ \hline
 1981   & \cite{P81} & $(n^3+12n^2+26n)/3$ & $\omega_{46}< 2.7762$   \\ \hline
 1982   & \cite{P82} & $(4n^3+45n^2+ 128n+108)/12$ & $\omega_{44}< 2.7734$   \\ \hline
    \end{tabular}
\end{center}
\end{table}



  

%

All in all, the concept of theoretical exponent of MM has been historically motivated but has not been related to feasible MM.
The complexity exponents of feasible MM have much more relevance to the real world computations,
but in the next section  we significantly  increase the efficiency of feasible MM without decreasing its record complexity exponent. 


\section{Some Ways to Acceleration of Feasible MM with No Decrease of the Complexity Exponent}


\subsection{Acceleration of Complex and Polynomial MM}\label{scmplpl}  


Substitute matrices
for variables of bilinear algorithms
of Examples \ref{excmplpr} and
\ref{excmplpr} 
and obtain efficeint algorithms 
 for Complex and Polynomial MM.

\bigskip

{\bf The 3M Method for Complex MM.}

The rank-3 bilinear algorithm of Example \ref{excmplpr} for multiplying two complex numbers
saves one multiplication but
uses three extra
additions and subtractions in
comparison to the straightforward algorithm of rank 4. Now substitute  $N\times N$ matrices
for the variables $a_1$, $a_2$, $b_1$, and $b_2$ and
 arrive at  the problem 
of multiplying a pair of $N\times N$ 
complex matrices $A_1+{\bf i}A_2$ by $B_1+{\bf i}B_2$.
Then the latter algorithm, called the {\em 3M method},
involves 
$3N^3$ scalar multiplications and 
$3N^3+2N^2$
additions and subtractions versus straightforward $4N^3$
and $4N^3-2N^2$. 
This means saving of about 25\% of all operations 
already for $N=30$ (cf. \cite{H02}).

\bigskip

{\bf Fast Polynomial MM.}

Consider $n\times n$ MM where the input matrices 
are filled with polynomials of degree at most $d-1$.

By applying the straightforward MM
to these input matrices reduce our task to
 performing 
$n^3$ multiplications and $n^3-n^2$
additions of polynomials of degrees at most $d$. All these polynomial additions together involve
just $(n^3-n^2)d$ scalar additions. All polynomial multiplications together involve $(2d^2-2d+1)n^3$
scalar multiplications and additions if 
we apply straightforward polynomial  multiplication, but this bound turns 
into $(4.5K\log_2(K)+2K)n^3$,
for $K=2^k$, $2d-1< K<4d-2$, if 
we apply the FFT-based convolution algorithm of Example \ref{explpr}.

This is dramatic saving if the degree
bound $d$ is  large, but we can save much more  if we  consider the input as two polynomials with matrix coefficients and apply to them fast convolution algorithm of Example \ref{explpr}. In this way we reduce our Polynomial MM to performing at most
$K$ MM$(n)$  and at most $4.5K\log_2(K)+K$ additions, subtractions and  multiplications by scalars of $n\times n$  scalar matrices for $2d-1<K <4d-2$.
With the straightforward MM
we solve these tasks by using at most $2n^3 K+ 
4.5K\log_2(K)n^2$ scalar arithmetic operations overall, which additionally saves for us
$(4.5K\log_2(K)+d)(n-1)n^2$ 
scalar operations versus our first accelerated Polynomial MM.

We can further accelerate both Complex and Polynomial MM (decreasing all our estimates accordingly) if we incorporate fast algorithms for MM
 instead of applying the straightforward MM.


\subsection{Randomized acceleration of feasible MM}\label{srand}

The paper  \cite{DKM06}  presented  surprising
acceleration 
of approximate  rectangular MM
by means of randomization 
 (see a short exposition in \cite[Section 3.1]{M11} 
 and see arXiv 1710.07946 and the bibliography therein for the extension to low rank approximation of a matrix, which a highly popular task linked to fundamental matrix computations and Big Data mining and analysis).
 
We sketch the main result of \cite{DKM06}
by using the spectral and Frobenius matrix norms
$||\cdot||_2$ and $||\cdot||_F$, respectively, in the estimates
for the approximation  errors.

Given two  matrices $A=({\bf a}_j)_{j=1}^n$, of size $m\times n$,
with columns ${\bf a}_j$,
and $B=({\bf b}_j^T)_{j=1}^n$, of size $n\times q$, 
with rows ${\bf b}_j^T$, for $j=1,\dots,n$, 

(i) first compute 
 the so called {\em leverage scores}
(aka {\em importance sampling probabilities}),
$$p_j=\frac{||{\bf a}_j||_2~||{\bf b}_j^T||_2}{\sum_{j'=1}^n||{\bf a}_{j'}||_2~||{\bf b}_{j'}^T||_2}$$
for $j=1,\dots,n$, 

(ii) then randomly select (and re-scale by $1/\sqrt {cp_j}$) $c$ pairs of 
corresponding columns of $A$ and rows of $B$, thereby forming an
$m\times c$ matrix $C$ and a $c\times n$ matrix $R$, and 

(iii) finally compute an approximation $CR$ to the matrix product
$AB$.

It is proven in
\cite{DKM06} that

$$\sqrt c~||CR-AB||_F=O(||A||_F||B||_F)$$
both in expectation and with a high probability.


\subsection{Tensors at Work Again: History and Perspectives
of \\
Com\-puter-Aided Search for Fast MM}\label{sprspcaid} 
 
Properly directed computer-aided search is a natural tool in the search for
efficient basic designs for fast feasible MM, not necessarily directed to the decrease of the complexity exponent.

According to \cite{R79}, computer-aided search has  helped
already in 1979, in the design of 
the APA algorithm of \cite{BCLR79}.
  Even earlier, in  \cite{B70}, Richard P. Brent reduced such a search to a system of nonlinear equations (\ref{eqbrnt}) and proposed to apply least-squares minimization techniques for its solution.  

It is more convenient to use the equivalent system of equations (\ref{eqbrntbr}), whose
solution is in turn   equivalent 
  to finding  the standard canonical decomposition 
 CANDECOMP/PARAFAC for the MM tensor (see, e.g.,  \cite{KB09}). Unfortunately, none of the 
numerous techniques for 3D tensor decomposition has been successful in this 
particular case so far. Loosely speaking, the things are quite different
from the  cases favorable to the known techniques because 
solving Brent's equations requires tensor data expansion rather than compression 
(the latter being the essence of CANDECOMP/PARAFAC technique).
The main reason seems to be rather large lower bounds on the rank of  MM 
in comparison with the rank of the 
general trilinear form $\sum_{i,j,h} t_{i,j,h}a_ib_jd_h$.
 
Now recall the uniqueness theorem of Kruskal  \cite{K77}, which in the case 
of the $MM(n)$ tensor (and with account for the full-rank of the 
three $n^2\times r$ matrices involved) requires that $r<3n^2/2$ for the canonical
decomposition to be essentially unique. This is a clear contradiction 
with the known lower bounds on the  rank $r$ of MM$(n)$ (e.g., $r\ge 2n^2-1$
of \cite{P72}
or $r\ge 3n^2-o(n^2)$ of \cite{S03} and \cite{MR14}).
As a result 
non-unique                                                                                                                                                 solutions of various  kind
 can be observed 
(see, e.g., \cite{B07}). 

Even more destructive for numerical optimization
methods is the presence of infinitely growing approximate solutions, which 
correspond to the existence of APA algorithms. This makes the 
customary tool of applying unconstrained 
ALS optimization inefficient,
 and its modification or alternatives are  required.
(ALS is the acronym for Alternating Least Squares.)

Some success with the ALS method used for 
the minimization of the Euclidean norm of the residuals  of Brent's equations was reported 
in  \cite{JML86} and  \cite{OKM13}. Namely those two studies produced two alternatives to Laderman's 
bilinear algorithm of \cite{L86} for the $MM(3)$ problem with the   
record rank $r=23$.   

  By cleverly extending Brent's approach A.V.
 Smirnov in  \cite{S13}
 achieved important  
 progress.\footnote{He has included 
 \cite{JML86} in his reference list but has not explained that the application of the ALS method to devising fast MM was implicit already in Brent's paper \cite{B70} of  1970.}
He proposed a very special modification of the ALS
procedure based on an adaptive "quantization" of iterated components, 
which luckily resulted in finding an exact solution to $MM(3,3,6)$ with $r=40$. 
This supports the MM complexity exponent $\omega_{54}<2.7743$, 
a remarkably low value for such small matrix sizes, almost matching the record 
$2.7734$ of \cite{P82} for the exponents of feasible MM. 

Unfortunately, the paper \cite{S13} 
 contains neither estimates for the number of scalar additions and subtractions involved in its algorithms nor a constructive recipe for 
the implementation of their 
additive stages. 
In  \cite{BB15} Smirnov's recursive algorithm based 
on $MM(3,3,6)$ has been implemented 
and tested for $MM(n)$
for  dimensions $n$ less than 13,000.
The test results are inferior to
those for  recursive bilinear
processes  based on  Winograd's and Strassen's 
Examples  \ref{ex1} and \ref{ex0},
respectively. 
By no means this comparison is final, however.
The models of communication complexity for serial and parallel computers
are dynamic in time, 
Smirnov 
 and other researchers 
have all chances to strengthen the ALS approach
to fast MM,
producing perhaps 
significant acceleration of practical MM.

Furthermore
the matrices 
  $U$,
$V$, and $W$
of the algorithms of \cite{S13} are  rather densely populated by 
nonzeros (near 50\%)
and may perhaps be sparsified 
if revised algorithms of \cite{S13}  properly incorporate the techniques of 
 TA and disjoint MM.
Indeed success of TA in designing fast  algorithms for feasible MM in \cite{K99} and  \cite{K04} indicates
potential value of that technique 
for simplifying computer-aided search.
In particular trilinear aggregation can exploit the 6-way symmetry in order to reduce the 
 search area for efficient MM algorithms.



\section{Numerical Implementation of Fast MM}\label{simplst}

 
Even the straightforward MM, if it is efficiently implemented, can compete in practice with fast MM. In this section we comment on numerical implementation of that and other efficient algorithms for feasible MM. Implementation of feasible MM must be
efficient in arithmetic cost,
decreasing  vectorization, and data locality (cf. \cite[Chapter 1]{GL13}). We cover all these issues, and notice that
quite frequently all three goals are well compatible 
with each other. 

  
So far in practical numerical computations 
MM is performed
  by means of either 
the straightforward algorithm or 
 recursive bilinear $(2\times 2)$-based MM,
typically the recursive application of
  Winograd's  Example \ref{ex1}
or, more rarely, Strassen's  Example \ref{ex0}.
The implementation of these old
algorithms has been  extensively
worked on by many authors and
makes up a valuable part of the 
present day MM software
(cf.\ \cite{B88}, \cite{H90},
\cite{DHSS94}, \cite{DGP04}, \cite{DGP08}, \cite{BDPZ09},
\cite{DN09}, \cite{BBDLS15},  \cite{BB15},
\cite{BD16}, the references therein, and
in \cite[Chapter 1]{GL13}).
This work,
intensified lately, is still mostly
devoted to the implementation of very old algorithms, ignoring, for example,  
 the advanced implementations of fast MM in
\cite{K99}
and \cite{K04} (see Section \ref{simpta})
 and 
 the significant improvement  
  in \cite{CH17} and \cite{KS17}
  of the recursive bilinear MM based on 
  Examples \ref{ex0} and \ref{ex1}
  by Winograd and  Strassen.
 
We hope that our survey will motivate
 advancing the State of the Art both in the design of fast algorithms for feasible MM and in their efficient implementation.


\subsection {Implementation of Trilinear Aggregation Algorithms}\label{simpta}


\bigskip

Already the first implementations in \cite{K99} and \cite{K04} of the fast algorithms for MM of moderate
sizes based on trilinear aggregation showed their superiority to the alternative approaches regarding
numerical stability, memory consumption, and efficiency for parallel MM.
One can immediately see why so: in the 2- and 3-disjoint product algorithms in the
implementation of \cite{K04}, the coefficient tensors $U$, $V$, and $W$
of (\ref{equvw}) and (\ref{equvw'}) are ``supersparse".

This explains good numerical stability of the latter algorithms according to the customary
measurement by the exponent
2.322 of distinct nature, estimated in 
\cite{K04}. Moreover the paper
\cite{K04} shows significant reduction of the workspace consumption of its
algorithms in comparison with $(2\times 2)$-based MM (that is, recursive bilinear algorithms based on
Winograd's and Strassen's Examples \ref{ex0} and \ref{ex1}, respectively). Namely,
storage consumption decreases from 
$(2/3)N^2$ to $(1/4)N^2$ memory cells
(corresponding to the
MM exponent 2.776), that is, in 8/3 times.
  
The  recent important works \cite{BB15} and \cite{BBDLS15}, both covering the implementation and numerical
stability of fast MM algorithms, have omitted proper discussion of these significant 
benefits for the
implementation of \cite{K04}, apparently leaving the challenge to the study in the future. The papers \cite{K99}
and \cite{K04} had bad luck also with their exposition in the influential paper \cite{DN09}, where the citation ``The
practical implementation of Pan's algorithm $O(n^{2.79}$) is presented by Kaporin [Kaporin 1999; 2004]"
must be corrected into ``The practical implementation of Pan's disjoint matrix product $O(n^{2.811})$-
and $O(n^{2.776})$-algorithms is presented by Kaporin [Kaporin 1999; 2004]".

 Incidentally, the test 
results in the interesting experimental study in \cite{DN09} should be accepted with some caution because
they rely on pseudo random matrices, and such matrices tend to have too good numerical stability.

No further implementation
of trilinear
aggregation algorithms followed so far. This can be understood because  their design is  technically more involved
 and has been much less advertised than
the recursive MM algorithms based on Example \ref{ex1}.

Clearly, further work is in order on the assessment, implementation, amelioration, and extension of such
algorithms for fast feasible MM.


\subsection{Some Imaginary and Real Issues}\label{simplobst}

  
   \begin{enumerate}
 \item
 A considerable group  of numerical
analysts still believes in the folk ``theorem"
that fast MM is always numerical unstable, 
 but in actual tests loss  of accuracy 
in fast MM algorithms
was limited, and formal proofs of
quite reasonable  numerical stability 
of all known fast MM algorithms
is available (see \cite{BL80}, \cite{K99}, \cite{K04},
\cite{DDHK07}, and \cite{DDH07}).
  \item
The paper  \cite{BCD14} emphasizes the importance of non-arithmetic optimization
of matrix algorithms:
``The traditional metric for the efficiency of a numerical algorithm has been the number of arithmetic operations it performs. Technological trends have long been reducing the time to perform an arithmetic operation, so it is no longer the bottleneck in many algorithms; rather, communication, or moving data, is the bottleneck". This statement should be taken not too lightly, but still with a grain of salt: communication cost is limited to operating with the data in primary memory.
 For competent implementations of fast feasible MM 
their arithmetic cost is usually in a rather good accordance with 
 vectorization and communication cost.
 Together (rather than in conflict) with vectorization, numerical stability, and restricting data movement,
arithmetic cost
is still a critical ingredient 
of the evaluation of practical efficiency of MM algorithms (cf. \cite[Chapter 1]{GL13},
 \cite{BB15}
and \cite{BBDLS15}).
For a litmus test, smaller arithmetic cost of Winograd's 
algorithm of Example \ref{ex1} has
made it substantially  more popular 
among the users
than Strassen's algorithm of Example \ref{ex0}
in spite of its a little weaker numerical stability.
\item
Most of nowadays computational methods
are  largely driven by technology. 
In particular presently
computations in single precision are intensively promoted by the manufacturers (mainly NVIDIA) of the GP-GPU (general purpose graphic processing units). In such circumstances, floating-point computations are less preferable compared to the integer residual-based arithmetic, where rational numbers can be used, but using the integers 1, 0, and $-1$ is preferable.
  \end{enumerate}

 
\section{Non-numerical Implementation of Fast MM}\label{simplsymbint}
   
 Implementation of MM in Computer Algebra has become highly efficient when it was reduced to 
 performing MM over {\em word size finite fields}
 with the outputs {\em combined by means of the Chinese Remainder Algorithm}. The paper
 \cite{DGP02} spells out 
 the following principles for computations in finite fields, which were  basic for this success:
 \begin{enumerate} 
\item 
Reduce computations in a finite field  to integer arithmetic with delayed or simultaneous modular reductions;
\item 
 perform integer arithmetic by invoking  floating point units (taking advantage of SIMD
instructions and of numerical BLAS);
\item 
 structure the computations in blocks in order to optimize the use of the memory hierarchy of
the current architectures;
\item 
apply fast MM  algorithms
(so far in practice they are  mostly recursive bilinear algorithm for MM based on
Winograd's $2\times 2$ MM of Example
 \ref{ex1} or less frequently 
based on
Strassen's $2\times 2$ MM of \ref{ex0}, but also Kaporin's algorithms of 
\cite{K99} and \cite{K04} and the 
Any Precision Approximation (APA) algorithm of \cite{BCLR79} are
used).
 \end{enumerate} 
  
 We have two further comments:
 
\begin{itemize} 
\item 
The implementations  
 of fast MM in \cite{K99} and \cite{K04} 
 are particularly attractive within
 this framework because it uses matrices 
 $U$, $V$ and $W$  filled with
  shorter  numbers. Therefore one can 
   perform
  the Chinese Remainder Algorithm 
  for fewer primes and invoke it fewer times.
   \item 
We already pointed out that APA  algorithms are prone to the problems of numerical stability,
but  have good promise for 
   symbolic MM in Computer Algebra
   and for computations with integers (see such important computations  in \cite[Section 8]{P15a}). The current level of the known implementations of APA MM,
    however, is  rudimentary,  staying at the level of the  paper \cite{BCLR79} of 1979,
 and must be moved forward dramatically;
our survey should help to accelerate this process. 
 \end{itemize}
 We refer the reader to \cite{DP16}  
 and the bibliography therein on further details of fast symbolic MM.


\bigskip


{\bf {\LARGE {Appendix}}}
\appendix 


\section{Estimation of the Theoretical Exponent of 
 MM}\label{smmexp}


Tables \ref{tab1} and \ref{tab1a}
show 
 the dynamics of the record estimates for the theoretical exponent of $MM(n)$
 since 1969.  
The tables link each estimate to its recorded
publication in a journal, a conference proceedings, or
as a research report.
It displays the chronological process and reflects
the competition for the decrease of estimates for  the theoretical exponent, 
particularly intensive in 1979--1981 and 1986.     
 
The record upper estimates for the theoretical
exponent  have been 
updated four times during the single year of 1979.
At first the estimate 2.7801 for  appeared in February in a Research Report 
(see \cite{P80}). The estimate 2.7799  
appeared at first as one for an APA-based estimate for the  exponent of MM
in \cite{BCLR79} in June
and then for the theoretical exponent of MM
in \cite{B80}.
 The next upper estimate  2.548 of \cite{S81}  was followed by 
2.522 of \cite{P81}, both published in the   
book of abstracts 
of the conference on the Computational Complexity in Oberwolfach, West Germany, 
organized by Schnorr, Sch{\"o}nhage and Strassen 
(cf. \cite[page 199]{P84} and \cite{S81}). 

The upper bound 2.496 of \cite{CW82}
 was reported in October 1981 at the IEEE FOCS'81,
but in Table \ref{tab1} 
we place it after the estimate 2.517
of the paper \cite{R82} of 1982,
which was submitted in March 1980.
The Research Report version of the paper \cite{CW90}
appeared in August of 1986, 
but in Table \ref{tab1} 
 we place  \cite{CW90} after the paper  
\cite{S86},  published in October of 1986 in the Proceedings of the IEEE FOCS,
because  the paper \cite{S86} has been  
submitted to FOCS'86 in the Spring of 1986 and has been
widely circulated afterwards.

One could complete the historical account
of Tables \ref{tab1} and \ref{tab1a} 
by including our  
 estimate
 2.7804 (announced in
the Fall of 1978, but was
 superseded in February 1979 in \cite{P80})
and the bound 2.5218007, which
 decreased our  estimate 2.5218128 of 1979  
and appeared at the end of
the final version of 
 \cite{S81}  in 1981,\footnote{Observe cross-fertilization:
 Sch{\"o}nhage's disjoint MM and Partial MM have been motivated by trilinear aggregation of \cite{P72} and \cite{P78} and by the design of \cite{BCLR79}, respectively; in turn
the paers \cite{P79} and \cite{P81}  decreased the exponent 2.548 of \cite{S81} to 2.522 by combining his disjoint MM with the technique of  trilinear aggregation.}
that is, before the publication, but after the submission of the  estimate 2.517 of \cite{R82}. 
Table \ref{tab1a} shows the decrease of the record estimates in 1986 -- 2014.
 
We refer the reader 
to  \cite{C82}, \cite{LR83},  \cite{C97},  \cite{HP98}, \cite{KZHP08}, 
  \cite{LG12}, and the bibliography therein
for similar progress in asymptotic acceleration of rectangular MM
and its theoretical implications. 

\begin{table}[h] 
\caption{Record upper estimates for the theoretical exponent of 
 MM.}
\label{tab1}
  \begin{center}
    \begin{tabular}
{| c |c | c| c |c |c| c|}
      \hline
Exponent  & 2.8074 & 2.7962  & 2.7801 & 2.7799 & 2.548 &2.522  \\ \hline
Paper & 
 \cite{S69}   & \cite{P78} & \cite{P80} & \cite{BCLR79}, \cite{B80} & \cite{S81} &\cite{P81} 
\\ \hline
Year & 1969  &  1978  &  1979  &  1979  & 1979 & 1979   \\ \hline
Exponent  & 2.517  & 2.496& 2.479 & 2.376  & 2.374 & 2.373  \\ \hline
Paper  & \cite{R82} & \cite{CW82} &  \cite{S86} & \cite{CW90}  
& \cite{S10}, \cite{DS13} & \cite{VW14}, \cite{LG14}   \\ \hline
Year  & 1980  & 1981  &    1986  &   1986  &  2010  & 2012  \\ \hline
    \end{tabular}
\end{center}
\end{table}

\begin{table}[h] 
\caption{The records in 1986--2014.}
\label{tab1a}
  \begin{center}
    \begin{tabular}
{|c | c| c |c | c| }
      \hline
Exponent\footnote{Actually, the 
proceedings version of  \cite{VW14} in STOC 2012 
claimed a bound
below 2.3727, but this turned out to be premature}
  & 2.3754770 & 2.3736898  
& 2.3729269 & 2.3728639 \\ \hline
Paper &
 \cite{CW90}& 
 \cite{S10}, \cite{DS13} & \cite{VW14} & \cite{LG14} \\ \hline
Year  & 1986 &  2010  & 2012 & 2014 \\ \hline
    \end{tabular}
\end{center}
\end{table}




\section{ FFT, Inverse FFT, and Convolution}\label{scnvfft}


{\em Fast Fourier transform} 
(hereafter referred to as {\em FFT})
 is a celebrated example of  
{\em recursive divide  and conquer algorithms}. It  computes discrete Fourier transform
at $K$ points, 
that is, evaluates a  
polynomial $p(x)=\sum_{i=0}^{K-1}p_ix^i$ 
at the $K$th roots of 1.
For  $K=2^k$ it reduces the task  
to two such problems of half-size: 
$$p(x)=p^{(0)}(y)+xp^{(1)}(y),$$
$$p^{(0)}(y)=\sum_{i=0}^{K/2-1}p_{2i}y^i,~  
~p^{(1)}(y)=\sum_{i=0}^{K/2-1}p_{2i+1}y^i$$ 
where 
$y=x^2$ is a $(K/2)$nd root of 1 if $x$ is a $K$th roots of 1. 
Recursively, the problem 
is reduced  to four problems of quarter size and
ultimately to $K$ problems of size 1,
whose solution ids instant, requiring no
arithmetic operations.
Each of  $k=\log_2(K)$ recursive stages involves
$K$ additions/subtractions
and $K/2$ multiplications ($K/2$ other multiplications are by $-1$ and are 
performed as subtractions).
Thus  the overall  arithmetic cost of FFT is $1.5K\log_2(K)$, versus $K^2$ multiplications and $K^2$ additions
required in the straightforward algorithm. This
is dramatic saving for large $K$, e.g., more than 50,000-fold for $K=1,000,000$. 
 
 Now notice that discrete Fourier transform
computes the vector of values 
${\bf v}$ of a polynomial $p(x)$
at the $K$th roots of unity,
${\bf v}=\Omega {\bf p}$ for the vector
${\bf p}=(p_i)_{i=0}^{K-1}$ of the coefficients of $p(x)$ and the matrix 
$\Omega=(\omega_K^{ij})_{i,j=0}^{K-1}$
filled with  $K$th roots of unity,
where $\omega_K$ is a primitive $K$th root of 1, that is, $\omega_K^K=1$, $\omega_K^h\neq 1$
for $0<h<K$,
\begin{equation}\label{eqrtunt}
\omega_K=\exp(2\pi {\bf i}/K)~{\rm for}~{\bf i}=\sqrt {-1}.
\end{equation}

{\em Inverse discrete Fourier transform}
computes the vector ${\bf p}=\Omega^{-1} {\bf v}$ of the coefficients from a given vector
${\bf v}$ of the values.
It turned out that $\Omega^{-1}=
\frac{1}{K}\Omega^*=\frac{1}{K}(\omega_K^{-ij})_{i,j=0}^{K-1}$,
and one can just readily extend FFT  and then perform 
 $K$ divisions by $K$.

Due to the wide range of important applications of  FFT 
in Modern 
Computations, this recursive divide and conquer  algorithm has become  immensely popular since its publication  
in 1965 in \cite{CT65}
and was
justly included 
into the list of the Ten Top Algorithms of the 20th
century  \cite{C00}, 
even though its origin
has been traced back to posthumous notes of C. F. Gauss, 1777--1855.
See \cite{K97} for early history of FFT and see \cite{AHU74}, \cite{BP94},  \cite[Sections 2.2--2.4]{P01}, 
\cite[Chapter 12. Fast Fourier Transform]{PTVF07}, 
\cite{VL92}, and the bibliography therein
for derivation
of FFT and inverse FFT, their structured matrix 
version, generalization to the case of any integer $K$, 
 numerical stability of the output vector norm, parallel implementation, and further improvements.
  
 
The straightforward algorithm for convolution
involves $(m+1)(n+1)$ multiplications and $mn$ additions,
but the combination of
the Toom's seminal {\em method of evaluation--interpolation} of \cite{T63}
with FFT and Inverse FFT
enables dramatic acceleration.

Let $m=n$ in order to simplify the estimates.
 
\bigskip

{\large \bf Algorithm 1: Convolution via Evaluation, Interpolation, and FFT.}

\bigskip

INITIALIZATION.

 Fix the integer $K=2^k$ being a power of 2 in the range $2n<K\le 4n$,
that is, $k=2+\lfloor\log_2(n)\rfloor$. 

\bigskip

COMPUTATIONS.

\begin{enumerate}
\item
 Compute the values $a(\omega_K^i)$ and  $b(\omega_K^i)$ for $\omega_K$ of (\ref{eqrtunt}) and 
$i=0,\dots,K-1$. They  are the values of the polynomials $a(x)$ and $b(x)$
at the $K$th roots of 1.
\item
Compute the products $c(\omega_K^i)=a(\omega_K^i)b(\omega_K^i)$
for 
$i=0,\dots,K-1$. They  are the values
of the polynomial $w(x)$ at the same $K$ points.
\item
Interpolate to the polynomial $c(x)$ from its values at these points.
\end{enumerate}

Stages 1 and 3 of the algorithm 
 amount to
 multipoint evaluation and interpolation
of polynomials 
at the $K$th roots of 1,  that is,
{\em Forward and Inverse 
Discrete Fourier Transforms},
respectively.  FFT and Inverse FFT
 perform these stages by using 
$K+4.5K\log_2(K)$ 
arithmetic operations. 
Add $K$ bilinear  multiplications,
performed at
Stage 2,
and arrive at the overall  arithmetic computational cost bound $2K+4.5K\log_2(K)$.

 Toom's evaluation--interpolation method  
has a number of further applications to polynomial
and rational computations. For example, it can 
be extended immediately to fast computation of the
quotient $q(x)=u(x)/v(x)$ of two polynomials $u(x)$ and $v(x)$
provided that the remainder of the division is 0;
this restriction has been removed in \cite{PLS92}.









\begin{thebibliography}{hspace{0.5in}}


\bibitem{AHU74}
A.V. Aho, J.E. Hopcroft, J.D. Ullman, 
{\em The Design and Analysis of Algorithms.}
Addison-Wesley, Reading, MA, 1974.


\bibitem{AGM97}
N. Alon, Z. Galil, O. Margalit, On the Exponent of the All Pairs Shortest Path Problem, 
{J. of Computer and System Sciences}, {\bf 54,~2}, 255-262. 1997.


\bibitem{ASU13}
N. Alon, A. Shpilka, C. Umans, On Sunflowers and Matrix Multiplication.
{\em Computational Complexity}, {\bf 22,~2}, 219--243, 2013. 


\bibitem{AFLG14}
A. Ambainis, Y. Filmus, F. Le Gall,
Fast Matrix Multiplication: Limitations of the Laser Method.
{\em Electronic Colloquium on Computational Complexity (ECCC)}, year 2014, paper 154, http://eccc.hpi-web.de/report/2014/154.

Available at arXiv:1411.5414, November 21, 2014.
 

\bibitem{AP09}
R.R. Amossen, R. Pagh, Faster Join-projects and Sparse Matrix Multiplications. In
{\em Proceedings of the 12th International Conference on Database Theory}, 
121--126, 2009.


\bibitem{ADKF70}
V. Z. Arlazarov, E. A. Dinic, M. A. Kronrod, I. A. Faradzhev, On economical construction of the transitive
closure of a directed graph, {\em Soviet Mathematics Doklady (DAN)}, {\bf 11,~5}, 1209--1210, 1970.


\bibitem{ABBDa}
A. Azad, G. Ballard, A. Buluç, J. Demmel, L. Grigori, O. Schwartz, S. Toledo, S. Williams,
 Exploiting multiple levels of parallelism in sparse matrix-matrix multiplication,
{\em SIAM Journal on Scientific Computing (SISC)}, {\bf 38,~ 6}, C624--C651, 2016. 


\bibitem{BBDLS15}
G. Ballard, A.R., Benson, A. Druinsky, 
B. Lipshitz, O. Schwartz, 
 Improving the numerical stability of fast matrix multiplication algorithms, 
{\em  SIAM Journal on Matrix Analysis and Applications (SIMAX)}, {\bf 37,~4}, 1382--1418, 2016. 


\bibitem{BCD14}
G. Ballard, E. Carson, J. Demmel,
M. Hoemmen, N. Knight, O. Schwartz,
Communication Lower Bounds and Optimal Algorithms 
for Numerical Linear Algebra.
{\em Acta Numerica}, {\bf 23}, 1--155, 2014.

\bibitem{BDHLS12}
G. Ballard, J. Demmel, O. Holtz, B. Lipshitz, O. Schwartz, Graph Expansion Analysis for Communication Costs of Fast Rectangular Matrix Multiplication, in {\em Design and Analysis of Algorithms}, G. Even and D. Rawitz (eds.), {\em Lecture Notes in Computer Science}, Springer, Berlin--Heidelberg, {\bf 7659}, 13--36, December 2012.

\bibitem{BDKSa}
G. Ballard, A. Druinsky, N. Knight, O. Schwartz,
Hypergraph Partitioning for Sparse Matrix-Matrix Multiplication,
arXiv:1603.05627.


\bibitem{BW09}
 N. Bansal, R. Williams, Regularity lemmas and combinatorial algorithms, {\em Proc. 50th Ann. IEEE Symp. on Foundations of Computer Science (FOCS 2009)}, 745--754, 2009.


\bibitem{B07}
G. Bard, Algorithms for Solving Linear and Polynomial Systems of Equations over
Finite Fields with Applications to Cryptanalysis, Submitted in Partial Fulfillment for
the degree of Doctor of Philosophy of Applied Mathematics and Scientific Computation.
PhD Thesis, University of Maryland at College Park, April 30, 2007.


\bibitem{B88}
D.H. Bayley,
Extra High Speed Matrix Multiplication on the Cray-2.
{\em SIAM J. on Scientific and Statistical Computing}, {\bf 9,~3}, 603--607, 1988. 




\bibitem{BT16}
B. Beckermann, A. Townsend, On the singular values of matrices with displacement structure,
 {\em SIAM J. on Matrix Analysis}, 
{\bf 38,~4}, 1227--1248, 2017. Also 	arXiv:1609.09494 [math.NA], 22 pages, 4 figures
(submitted on 29 September 2016).


\bibitem{B58}
E.G. Belaga,
Some Problems in Computation of Polynomials,
{\em Doklady of Academy of Science of USSR}, 
{\bf 123},  775--778, 1958.


\bibitem{BEGO08}
T. Bella, Y. Eidelman, I. Gohberg, V. Olshevsky,
Computations with Quasiseparable Polynomials and Matrices,
{\em Theor.
%
 Computer Science}, 
{\bf 409}, {\bf 2}, 158--179, 2008.


\bibitem{BB15}
 A.R. Benson, G. Ballard,
 A framework for practical parallel fast matrix multiplication. 
In {\em Proceedings of the 20th ACM SIGPLAN Symposium 
on Principles and Practice of Parallel Programming}, 42--53, 
ACM Press, New York, January 2015.


\bibitem{B80}
D. Bini, 
Relations Between Exact and Approximate Bilinear Algorithms: Applications.
{\em Calcolo},
{\bf 17,~1}, 87--97, 1980.


\bibitem{BCLR79}
D. Bini, M. Capovani, G. Lotti, F.  Romani, 
$O(n^{2.7799})$ Complexity for $n\times n$ Approximate Matrix Multiplication.  
{\em Information  Processing Letters}, {\bf 8,~5}, 234--235, June 1979.


\bibitem{BCLR81}
D. Bini, M. Capovani, G. Lotti, F.  Romani, 
{\em Complessit{\`a} Numerica}, Boringhieri publisher, 1981. 

\bibitem{BF00}
D. A. Bini, G. Fiorentino,
Design, Analysis, and Implementation of
a Multiprecision Polynomial Rootfinder, {\it Numer.
Algs.}, {\bf 23}, 127--173, 2000.


\bibitem{BL80}
D. Bini, G. Lotti,  
Stability of Fast Algorithms 
for Matrix Multiplication. 
{\em Numerische Math.}, {\bf 36,~1}, 
63--72, 1980.


\bibitem{BLR80}
D. Bini,  G. Lotti, F.  Romani,
Approximate solution for the bilinear form computational problem,  
{\em SIAM J. on Computing}, {\bf 9,~4}, 692--697, 1980.


\bibitem{BP94}
D. Bini, V.Y. Pan,
{\em Polynomial and Matrix Computations, Volume 1: Fundamental Algorithms}.
Birkh\"auser, Boston, 1994.


\bibitem{B99}
M. Bl{\"a}ser,
A $5/2 n^{2}$-Lower Bound for the Multiplicative Complexity 
of $n\times n$ Matrix Multiplication over Arbitrary Fields.
{\em Proc. 40th Ann. IEEE Symp. on Foundations of Computer Science (FOCS 1999)}, 
45--50, IEEE Computer Society Press, Los Alamitos, CA
1999.


\bibitem{B00}
M. Bl{\"a}ser,
Lower Bounds for the Multiplicative Complexity of Matrix Multiplication.
{\em J. of Computational Complexity}, {\bf 9,~2}, 73--112, 2000.


\bibitem{BCCG16}
J. Blasiak, T. Church, H. Cohn, J. A. Grochow, E. Naslund, W. F. Sawin, C. Umans,
On cap sets and the group-theoretic approach to matrix multiplication,
arXiv:1605.06702, 2016.



\bibitem{B10}
M. Bodrato, A Strassen-like matrix multiplication suited for squaring and higher power computation, In {\em Proceedings of the 2010 International Symposium on Symbolic
and Algebraic Computation (ISSAC '2010)},  273--280, ACM Press, New York, 2010.


\bibitem{B79}
A. Borodin, private communication, 1979. 


\bibitem{BM75}
A. Borodin, I. Munro, 
{\em The Computational Complexity of Algebraic and Numeric
Problems.} American Elsevier, New York, 1975.


\bibitem{BJS08}
A. Bostan, C.-P. Jeannerod, {\'E}. Schost,
Solving structured linear systems with large displacement rank. 
{\em Theoretical Computer Science}, {\bf 407,~1--3}, 155--181, 2008.

\bibitem{BD16}
	B. Boyer, J.-G. Dumas,
   	Matrix Multiplication over Word-size Modular Rings Using Approximate formulae,
{\em ACM Transactions on Mathematical Software}, {\bf 42,~3}, 20:1--20:12,  2016. 


\bibitem{BDPZ09}
B. Boyer, J.-G. Dumas,
C. Pernet, W. Zhou,
 Memory Efficient Scheduling of Strassen-Winograd's Matrix Multiplication Algorithm. 
{\em Proc. Intern. Symposium on Symbolic and Algebraic
Computation (ISSAC 2009)},
 55--62, ACM Press, New York, 2009.


\bibitem{B70}
R. P. Brent, 
Algorithms for matrix multiplication. Technical report 70-157,
Stanford University, Computer Science Department, 1970. Available at

http://maths-people.anu.edu.au/\~{~}brent/pub/pub002.html


\bibitem{BD73}
R.W. Brockett, D. Dobkin,
On Optimal Evaluation of a Set of Bilinear Forms.
{\em Proc. of the 5th Annual Symposium on the Theory of Computing (STOC 1973)},
88--95,
ACM Press, New York, 1973.


\bibitem{BD76}
R.W. Brockett, D. Dobkin,
On the Number of Multiplications
Required for a Matrix Multiplication.
{\em SIAM Journal on Computing}, {\bf 5,~4}, 624--628, 1976.


\bibitem{BD78}
R.W. Brockett, D. Dobkin,
On Optimal Evaluation of a Set of Bilinear Forms.
{\em Linear Algebra and Its Applications}, 
{\bf 19,~3}, 207--235, 1978.


\bibitem{B89}
N.H. Bshouty,
 A Lower Bound for Matrix Multiplication.
{\em  SIAM J. on Computing},  {\bf 18,~4}, 759--765, 1989.


\bibitem{B95}
N.H. Bshouty,
 On the Additive Complexity of $2\times 2$ Matrix Multiplication, 
{\em Information Processing Letters}, {\bf 56,~6}, 329--335, 1995.


\bibitem{BG12}
A. Buluç, J. R. Gilbert,
 Parallel Sparse Matrix-Matrix Multiplication and Indexing: 
Implementation and Experiments,
{\em SIAM Journal of Scientific Computing}, 
{\bf 34,~4},  C170–C191, 2012.


\bibitem{BH74}
J.R. Bunch, J.E. Hopcroft,
Triangular Factorization and Inversion by Fast Matrix Multiplication.
{\em Mathematics of Computation},
{\bf 28,~125}, 231--236, 1974.


\bibitem{BCS97}
P. B{\"u}rgisser, M. Clausen, M.A. Shokrollahi,
{\em Algebraic Complexity Theory}.
Springer Verlag, 1997.


\bibitem{CGR98}
J. Carrier, L. Greengard, V. Rokhlin,
A Fast Adaptive Algorithm for Particle Simulation,
{\em SIAM J.  Scientific Computing}, {\bf 9}, 669--686, 1998.


\bibitem{CH17}
M. Cenk, M.A. Hasan, On the Arithmetic Complexity of Strassen-Like Matrix Multi-
plications,
{\em  J. of Symbolic Computation}, {\bf 80, ~2}, 484--501, May--June 2017.



\bibitem{C15}
 T. M. Chan, Speeding up the Four Russian?s algorithm by about one more logarithmic factor, {\em Proc. 
ACM-SIAM Symp. on Discrete Algorithms (SODA 2015)}, 212--217, 2015.


\bibitem{C00}
B. A. Cipra, 
The Best of the 20th Century: Editors Name Top 10 Algorithms,
{\em SIAM News},  
{\bf 33~4}, 2, published by the Society for Industrial and Applied Mathematics,
May 16, 2000. 


\bibitem{CKSU05} 
H. Cohn, R. Kleinberg, B. Szegedy, C. Umans,
Group-theoretic Algorithms for
 Matrix Multiplication. 
{\em Proceedings of the 46th Annual
Symposium on Foundations of Computer Science (FOCS 2005)}, (Pittsburgh, PA),
 379--388, IEEE Computer Society Press,
 2005.


\bibitem{CU03} 
H. Cohn, C. Umans,
 A Group-theoretic Approach to
Fast Matrix Multiplication. 
{\em Proceedings of the 44th Annual
Symposium on Foundations of Computer Science (FOCS 2003)}, (Cambridge, MA),
 438--449, IEEE Computer Society Press,
 2003.


\bibitem{CU13} 
H. Cohn, C. Umans,
 Fast Matrix Multiplication Using Coherent Configurations. 
{\em Proceedings of the 24th Annual ACM-SIAM Symposium on Discrete Algorithms (SODA 2013)}, 
 1074--1087, 2013.


\bibitem{CT65} 
J.W. Cooley, J. W. Tukey, 
An algorithm for the machine calculation of complex Fourier series, 
{\em Mathematics of Computation},
{\bf 19~(90)}, 297--301, 1965.


\bibitem{C82} 
D. Coppersmith,
 Rapid Multiplication of Rectangular Matrices.
{\em SIAM Journal on Computing}, {\bf 11,~3}, 467--471, 1982.


\bibitem{C97} 
D. Coppersmith,
Rectangular Matrix Multiplication Revisited.
{\em Journal of Complexity}, {\bf 13, ~1}, 42--49, 1997.


\bibitem{CW82} 
D. Coppersmith, S. Winograd,
On the Asymptotic Complexity of Matrix Multiplication,
{\em SIAM J. on Computing}, {\bf 11,~3}, 472--492, 1982.
 Proceedings version in {\em 23rd FOCS} 
(Nashville, TN), 82--90, IEEE Computer Society Press, 1981.\\
 doi:10.1109/SFCS.1981.27 


\bibitem{CW90} 
D. Coppersmith, S. Winograd,
Matrix Multiplicaton via Arithmetic Progressions.
{\em J. of Symbolic Computations}, {\bf 9,~3}, 251--280, 1990.
Proc. version in {\em 19th ACM Symposium on Theory of Computing (STOC 1987)},
 (New York, NY), 1--6,
ACM Press, New York, NY, 1987. Also Research Report RC 12104, 
{\em IBM T.J. Watson Research Center}, August 1986.




\bibitem{DN09}  
P. D'Alberto,  A. Nicolau, 
 Adaptive Winograd's Matrix Multiplication.
{\em ACM Transactions on Mathematical Software}, {\bf 36,~1}, paper 3, 2009.


\bibitem{DS13}  
 A.M. Davie, A.J. Stothers, Improved Bound for Complexity of Matrix Multiplication.
{\em Proceedings of the Royal Society of Edinburgh}, {\bf 143A}, 351--370, 2013.


\bibitem{dG78} 
H.F. de Groot, 
On Varieties of Optimal Algorithms for the 
Computation of Bilinear Mappings. {\it Theoretical Computer Science}, 
{\bf 7,~2}, 127--148, 1978.


\bibitem{DI00}
C. Demetrescu, G.F. Italiano, Fully Dynamic Transitive Closure: Breaking Through 
the $O(n^2)$ Barrier. In
{\em Proceedings of the 41st Annual Symposium on Foundations of Computer Science (FOCS 2000)}, 
381--389, 2000.


\bibitem{D97} 
 J.W. Demmel,
{\em Numerical Linear Algebra}, SIAM, Philadelphia, 1997.


\bibitem{DDH07} 
 J. Demmel, I. Dumitriu,
    O. Holtz, 
Fast Linear Algebra Is Stable.
{\em Numerische Mathematik},
 {\bf 108,~1}, 59--91, 2007. 


\bibitem{DDHK07}
 J. Demmel, I. Dumitriu, O. Holtz, R. Kleinberg,
Fast Matrix Multiplication Is Stable.
{\em Numerische Mathematik},
 {\bf 106,~2}, 199--224, 2007.


\bibitem{DHSS94}
 C.C. Douglas, M. Heroux, G. Slishman, 
R.M. Smith,
GEMMW: A Portable Level 3 BLAS Winograd Variant Of 
Strassen's Matrix-Matrix Multiply Algorithm.
{\em J. of Computational Physics},
{\bf 110,~1}, 1--10, 1994.
 

\bibitem{DIS11} 
C.-E. Drevet, Md. N. Islam, {\'E}. Schost,
Optimization Techniques for Small Matrix Multiplication. 
{\em Theoretical Computer Science}, {\bf 412,~22}, 2219--2236, 2011.


\bibitem{DKM06}
P. Drineas, R. Kannan, M.W. Mahoney,
 Fast Monte Carlo algorithms for matrices I: Approximating
matrix multiplication, {\em SIAM Journal on Computing}, 
{\bf 36,~1}, 132--157, 2006.


\bibitem{DP09}  
R. Duan, S. Pettie, Fast Algorithms for  $(\max-\min)$ Matrix Multiplication
and Bottleneck Shortest Paths, 
{\em Proceedings of the 15th Annual ACM-SIAM Symposium on
Discrete Algorithms (SODA 2009)}, 384--391, 2009.


\bibitem{DGP02}
J.-G. Dumas, T. Gautier, C. Pernet,
 Finite field linear algebra subroutines. In
Teo Mora, editor, ISSAC'2002, Proceedings of the 2002 ACM International Symposium on Symbolic and Algebraic Computation, Lille, France, pages 63--74. ACM Press,
New York, July 2002. 



\bibitem{DGP04}
 J.-G. Dumas, P. Giorgi, C. Pernet,
FFPACK: Finite Field Linear Algebra
Package.
{\em Proc. Intern. Symposium on Symbolic and Algebraic
Computation (ISSAC 2004)},
Jaime Gutierrez, editor, 119--126, ACM Press, New York, 2004.


\bibitem{DGP08}
 J.-G. Dumas, P. Giorgi, C. Pernet,
Dense Linear Algebra over Word-size Prime Fields: the
FFLAS and FFPACK Packages.
{\em ACM Trans. Math. Software}, {\bf 35,~3}, Article 19, 2009.


\bibitem{DP16}
J.-G. Dumas, V. Y. Pan, Fast Matrix Multiplication and Symbolic Computation, Available in arXiv: 1612.05766, December 2016.


\bibitem{DPS15} 
	J.-G. Dumas, C. Pernet, Z. Sultan,
Computing the Rank Profile Matrix, 
{\em Proc. ACM International Symposium on Symbolic and Algebraic Computations (ISSAC 2015)},  
146--153, 
 ACM Press, New York, 2015. 


\bibitem{F72a}
C.M. Fiduccia,
Polynomial Evaluation via the Division Algorithm: 
The Fast Fourier Transform Revisited.
{\em Proc. 4th Annual ACM Symposium 
on Theory of Computing (STOC 1972)},
88--93, ACM Press, New York, 1972. 



\bibitem{F72}
C.M. Fiduccia,
On Obtaining Upper Bound on the Complexity of
Matrix Multiplication.
In {\em Analytical Complexity of Computations}
 (edited by R.E. Miller, J. W. Thatcher, J. D. Bonlinger),
in  the {\em the IBM Research Symposia Series}, 
pp. 31--40, Plenum Press, NY, 1972. 



\bibitem{F74}
P.C. Fischer,
Further Schemes for Combining Matrix Algorithms.
{\em Proceedings of the 2nd Colloquium on Automata, Languages and Programming},
{\em Lecture Notes in Computer Science}, {\bf 14}, 
428--436, 
Springer-Verlag, London, UK, 1974.


\bibitem{FP74}
 M.J. Fischer,  M.S. Paterson,  
String-Matching and
Other Products.
\textit{SIAM--AMS Proc.}, {\bf 7}, 113--125, 1974.


\bibitem{GEKM14}
B. Ghannam , K. El Khoury, M. Nemer, The Nonrecursive Plating Algorithm (NRPA)
for Computing the Total Radiative Exchange Factors in Enclosures. {\em Numerical Heat
Transfer, Part B: Fundamentals: An International Journal of Computation and
Methodology,} {\bf 66,~2}, 109--132, 2014. DOI: 10.1080/10407790.2014.901003


\bibitem{GG13}
J. von zur Gathen, J. Gerhard (2013). 
{\em Modern Computer Algebra}. Cambridge University Press, Cambridge, UK, 
third edition,
2013.


\bibitem{GL13}
G.H. Golub, C.F. Van Loan,
{\em Matrix Computations}.
Johns Hopkins University Press, Baltimore, Maryland, 2013 (4th addition).


\bibitem{GR87}
L. Greengard, V. Rokhlin,
A Fast Algorithm for Particle Simulation,
{\em Journal of Computational Physics}, {\bf 73}, 325--348, 1987.


\bibitem{H16}
Y. Han, A $\Theta(n^2)$ Time Matrix Multiplication Algorithm,
preprint, 2016.


\bibitem{H90}
N.J. Higham,
Exploiting Fast Matrix Multiplication within Level 3 BLAS.
{\em ACM Trans. on Math. Software}, {\bf 16,~4}, 352--368, 1990.


\bibitem{H02}
N.J. Higham, {\em Accuracy and Stability in Numerical Analysis},
SIAM, Philadelphia, 2002 (second edition).


\bibitem{HK69}
J.E. Hopcroft, L.R. Kerr,
 Some Techniques for Proving Certain Simple Programs Optimal. 
{\em Proceedings of the Tenth Annual Symposium on Switching and Automata Theory},
 36--45, IEEE Computer Society Press, 1969.


\bibitem{HK71}
J.E. Hopcroft, L.R. Kerr, 
On Minimizing the Number of Multiplications Necessary for Matrix Multiplication.
{\em SIAM J. on Applied Math.}, {\bf 20,~1}, 30--36, 1971.


\bibitem{HM73}
J.E. Hopcroft, J. Musinski,
Duality Applied to Matrix Multiplication and Other
Bilinear Forms. {\em SIAM Journal on Computing}, {\bf 2,~3},
159--173, 1973.


\bibitem{HP98}
X. Huang, V.Y. Pan,
Fast Rectangular Matrix Multiplication and Applications.
{\em Journal of Complexity}, {\bf 14,~ 2}, 257--299, 1998.
Proc. version in  
{\em Proc. Annual ACM International Symposium on Parallel Algebraic and Symbolic Computation (PASCO'97)}, 11--23, ACM Press, New York, 1997.


\bibitem{HLSa}
Seung Gyu Hyun, R. Lebreton, 
{\'E}. Schost,  
Algorithms for structured linear systems solving and their implementation,
{\em Proc. Annual ACM International Symposium on Symbolic and Algebraic Computation 
(ISSAC 2017)}, 205--212, ACM Press, New York, 2017.


\bibitem{JML86}
R. W. Johnson, A. M. McLoughlin, 
Noncommutative Bilinear Algorithms
for $3\times 3$ Matrix Multiplication, 
{\em SIAM J. on Computing}, {\bf 15,~2}, 
595--603, 1986.


\bibitem{KSV06}
H. Kaplan, M. Sharir, E. Verbin, 
Colored Intersection Searching via Sparse
 Rectangular Matrix Multiplication. In 
{\em Proceedings of the 22nd ACM Symposium on Computational Geometry}, 52--60, 2006.


\bibitem{K99}
I. Kaporin,
A Practical Algorithm for Faster Matrix Multiplication.
{\em Numerical Linear Algebra with Applications}, {\bf 6,~8}, 687--700, 1999.


\bibitem{K04}
I. Kaporin,
The Aggregation and Cancellation Techniques as a 
Practical Tool for Faster Matrix Multiplication. 
{\it Theoretical Computer Science}, 
{\bf 315}, {\bf 2--3}, 469--510, 2004.


\bibitem{KZHP08}
 S. Ke, B. Zeng, W. Han,
V. Y. Pan, Fast Rectangular Matrix Multiplication and Some
Applications.
{\em Science in China, Series A: Mathematics}, {\bf 51,~3}, 389--406, 2008.


\bibitem{K98}
P. Kirrinnis,
Polynomial Factorization and Partial Fraction Decomposition by Simultaneous
 Newton's Iteration,
{\em J. of Complexity}, {\bf 14}, 378--444, 1998.


\bibitem{K97}
D.E. Knuth,
{\em The Art of Computer Programming: Volume 2, Seminumerical Algorithms}.
Addison-Wesley, Reading, Massachusetts, 1969 (first edition),
 1981 (second edition), 1997 (third edition).


\bibitem{K98a}
D.E. Knuth,
{\em The Art of Computer Programming: Volume 3, Sorting and Searching}.
Addison-Wesley, Reading, Massachusetts, 19673 (first edition),
  1998 (second edition).


\bibitem{KB09}
T.G. Kolda, B.W. Bader, 
Tensor Decompositions and Applications.
{\em SIAM Review}, {\bf 51,~3}, 455--500, 2009.


\bibitem{K77}
J. B. Kruskal, Three-way arrays: rank and uniqueness of trilinear decompositions, with application to arithmetic complexity and statistics, {\em Linear Algebra and Its Applications}, {\bf 18,~2}, 95--138, 1977.


\bibitem{KS17}
E. Karstadt, O. Shwatrz,
Matrix Multiplication, a Little Faster,
Proc. SPAA'17, 101--110, 2017. 



\bibitem{L86}
J.D. Laderman, A noncommutative algorithm for multiplying 3×3 matrices using 23 multiplications, 
{\em Bull. Amer. Math. Soc.}, {\bf 82,~1}, 126--128, 1976.



\bibitem{LPS92}
J. Laderman, V.Y. Pan, H.-X. Sha, 
On Practical Algorithms for Accelerated Matrix Multiplication.
{\em Linear Algebra and Its Applications}, 
{\bf 162--164}, 557--588, 1992.




\bibitem{L14}
J.M. Landsberg,  
New Lower Bound for the Rank of Matrix Multiplication.
{\em SIAM J. on Computing}, {\bf 43,~1}, 144--149, 2014.


\bibitem{L02}
L. Lee,
 Fast Context-free Grammar Parsing Requires Fast Boolean Matrix Multiplication. 
{\em Journal of the ACM (JACM)}, {\bf  49,~1}, 1--15, 2002.


\bibitem{LG12}
F. Le Gall,
 Faster Algorithms for Rectangular Matrix Multiplication. 
{\em Proceedings of the 53rd
Annual IEEE Symposium on Foundations of Computer Science (FOCS 2012)},
 514--523,  IEEE Computer Society Press, 2012.


\bibitem{LG14}
F. Le Gall,
Powers of Tensors and Fast Matrix Multiplication.
{\em Proceedings of the 39th International Symposium on Symbolic and Algebraic Computation (ISSAC 2014)}, 
296--303, ACM Press, New York, 2014.


\bibitem{LR83}
 G. Lotti, F.  Romani,
On the Asymptotic Complexity of Rectangular Matrix Multiplication.
{\em Theoretical Computer Science}, {\bf 23}, 171--185, 1983.


\bibitem{LT91}
 R. G. Lerner, G. L. Trigg. {\em Encyclopaedia of Physics} (2nd ed.), VHC publishers, 1991. ISBN 3-527-26954-1. 
 



\bibitem{M11}
M. W. Mahoney,
Randomized Algorithms for Matrices and Data,  
{\em Foundations and Trends in Machine Learning}, NOW Publishers, {\bf 3,~2}, 2011.
(Abridged version in: Advances in Machine Learning and Data Mining for Astronomy, 
edited by M. J. Way, et al., pp. 647-672, 2012.) 


\bibitem{MR14}
A. Massarenti, E. Raviolo,
Corrigendum to "The rank of
 $n\times n$
matrix multiplication is at least $3n^2-2\sqrt 2n^{3/2}-3n$"  
[{\em Linear
Algebra and its Applications}, {\bf 438,~11} (2013) 4500--4509].
{\em Linear
Algebra and its Applications}, 
{\bf 445}, 369--371, 2014.


\bibitem{MP80}
W. L. Miranker, V. Y. Pan,
Methods of Aggregations, 
{\em Linear Algebra and Its Applications}, 
{\bf 29}, 231--257, 1980.    


\bibitem{MB72}
R. Moenck, A. Borodin,
Fast Modular Transform via Division,
{\em Proc. 13th Ann. Symp. Switching Automata Theory}, 
90--96, IEEE Comp. Soc. Press, Washington, DC, 1972.


\bibitem{M55}
T.S. Motzkin,
Evaluation of Polynomials and Evaluation of Rational Functions.
{\em Bull. of Amer. Math. Society}, {\bf 61,~2}, 163, 1955.


\bibitem{OKM13}
Jinsoo Oh, Jin Kim,  	Byung-Ro Moon,
On the inequivalence of bilinear algorithms 
for 3× 3 matrix multiplication.
{\em Information Processing Letters},
{\bf 113,~17}, 640--645, 2013.


\bibitem{OT10} 
I.V. Oseledets, E.E. Tyrtyshnikov,
TT-cross Approximation for Multidimensional Arrays.
{\em Linear Algebra Appls.} {\bf 432,~1}, 70--88, 2010.


\bibitem{O54} 
A.M. Ostrowski,
On Two Problems in Absract Algebra Connected with Horner's Rule.
In the {\em Studies Presented to R. von Mises}, 40--48,
Academic Press, New York, 1954. 


\bibitem{P59}
V.Y. Pan,
Some Schemes for the Evaluation of Polynomials with Real Coefficients,
{\em Doklady Akademii Nauk SSSR} (in Russian), {\bf 127,~2}, 266--269, 1959.


\bibitem{P66}
V.Y. Pan, 
On Methods of Computing the Values of Polynomials.
{\em Uspekhi Matematicheskikh Nauk},
{\bf 21}, {\bf 1(127)}, 103--134, 1966.
[Transl. {\em Russian Mathematical Surveys},
{\bf 21, 1(127)}, 105--137, 1966.]


\bibitem{P72}
V.Y. Pan,
On Schemes for the Evaluation of Products and Inverses of Matrices (in Russian). 
{\em Uspekhi Matematicheskikh Nauk}, {\bf 27}, {\bf 5} {\bf (167)}, 249--250, 1972.


\bibitem{P78}
V.Y. Pan,
Strassen's Algorithm Is Not Optimal. Trilinear Technique of Aggregating for  
Fast Matrix Multiplication. {\em Proc. the 19th Annual IEEE Symposium on 
Foundations of Computer Science (FOCS'78)}, 
166--176, IEEE Computer Society Press, Long Beach, California, 1978.


\bibitem{P79}
V.Y. Pan,
Fields Extension and Trilinear Aggregating, 
Uniting and Canceling for the Acceleration of Matrix Multiplication. 
{\em Proceedings of the 20th Annual IEEE Symposium on Foundations 
of Computer Science (FOCS'79)},
28--38, IEEE Computer Society Press, Long Beach, California, 1979. 


\bibitem{P80}
V.Y. Pan,
New Fast Algorithms for Matrix Operations. 
{\em SlAM J. on Computing}, {\bf 9,~2}, 321--342, 1980,
and Research Report RC 7555, {\em IBM T.J. Watson Research Center}, February 1979. 


\bibitem{P80a}
V.Y. Pan,
The Bit-Operation Complexity of the Convolution of Vectors and of the DFT.
Technical report 80-6, Computer Science Dept., SUNY, Albany, NY, 1980.
(Abstract in Bulletin of EATCS, {\bf 14}, page 95, 1981.)
 

\bibitem{P81}
V.Y. Pan,
New Combinations of Methods for the Acceleration of Matrix Multiplications.
{\em Computers and Mathematics (with Applications)}, {\bf 7,~1}, 73--125, 1981.


\bibitem{P82}
V.Y. Pan,  Trilinear Aggregating with Implicit Canceling for a New Acceleration of Matrix Multiplication.
{\em Computers and Mathematics (with Applications)}, {\bf 8,~1}, 23--34, 1982.


\bibitem{P82a}
V.Y. Pan,
Fast Matrix Multiplication without APA-Algorithms, {\em Computers  and 
Mathematics (with Applications)}, {\bf 8,~5}, 343--366, 1982.


\bibitem{P84}
V.Y. Pan, How Can We Speed up Matrix Multiplication?
{\em SIAM Review,} {\bf 26}, {\bf 3}, 393--415, 1984.


\bibitem{P84a}
V.Y. Pan,
Trilinear Aggregating and the Recent Progress in the Asymptotic 
Acceleration of Matrix Operations. 
{\em Theoretical Computer Science}, {\bf 33,~1}, 117--138, 1984. 


\bibitem{P84b}
V.Y. Pan, 
{\em How to Multiply Matrices Faster.}
{\em Lecture Notes in Computer Science}, 
{\bf 179}, Springer, Berlin, 1984.




\bibitem{P90}
V. Y. Pan,
On Computations with Dense Structured Matrices,
{\em Math. of Computation}, {\bf 55,~191}, 179--190, 1990.
Proceedings version in {\em Proc. Intern. Symposium on Symbolic and Algebraic 
Computation (ISSAC'89)}, 34--42, ACM Press, New York, 1989.


\bibitem{P94}
 C. B.  Parker. {\em McGraw Hill Encyclopaedia of Physics (2nd ed.)}, 1994. ISBN 0-07-051400-3.


\bibitem{P01}
V. Y. Pan,
{\em Structured Matrices and Polynomials: Unified Superfast Algorithms},
Birkh\"auser/Springer, Boston/New York, 2001.


\bibitem{P14}
V.Y. Pan, 
Better Late Than Never: Filling a Void in the History
  of Fast Matrix Multiplication and Tensor Decompositions,
arXiv:1411.1972, November 2014.


\bibitem{P14a}
V.Y. Pan, 
Fast Approximate Computations with Cauchy Matrices and Polynomials,
{\em Math. of Computation}, in print.
Proceedings version in
{\em Proc. of the Ninth International Computer Science Symposium in Russia  (CSR'2014)},
(E. A. Hirsch et al., editors), Moscow, Russia, June 2014, 
 {\em Lecture Notes in Computer Science (LNCS)}, {\bf 8476}, pp. 287--300
Springer International Publishing, Switzerland (2014).


\bibitem{P15}
V.Y. Pan, 
Transformations of Matrix Structures Work Again, 
{\em Linear Algebra and Its Applications}, {\bf 465}, 1--32, 2015.


\bibitem{P15a}
V.Y. Pan, 
Matrix Multiplication, Trilinear Decompositions, APA Algorithms, and Summation,
arxiv 1412.1145  CS (16 pages, 2 figures), December 3, 2014, 
revised on February 5, 2015.


\bibitem{P16}
V.Y. Pan,
How Bad Are Vandermonde Matrices?, 
{\em SIAM Journal of Matrix Analysis},
{\bf 37,~2}, 676--694, 2016.




\bibitem{PLS92}
V. Y. Pan, E. Landowne, A. Sadikou,
Univariate Polynomial Division with a Reminder by Means of Evaluation and Interpolation, 
{\em Information Processing Letters}, {\bf 44}, 149--153, 1992.


\bibitem{PT14}
V. Y. Pan,  E. P. Tsigaridas,
Nearly Optimal Computations with Structured Matrices, {\em Theoretical Computer Science}, 
Special Issue on Symbolic--Numerical Algorithms
(Stephen Watt, Jan Verschelde, and Lihong Zhi, editors), {\bf 681}, 117--137, 2017. 
http://dx.doi.org/10.1016/j.tcs.2017.03.031/

Proceedings version in
{\em Proc. of the International Conference on Symbolic Numeric Computation (SNC 2014)},
ACM Press, New York, 2014. 

Also April 18, 2014,
arXiv:1404.4768 [math.NA] and http://hal.inria.fr/hal-00980591 


\bibitem{PTVF07}
W. H. Press, S. A. Teukolsky, W. T. Vetterling, B. P. Flannery, {\em Numerical Recipes: The Art of Scientific Computing}, (3 ed.). Cambridge University Press, New York, 2007. ISBN 978-0-521-88068-8.


\bibitem{P74}
R.L. Probert,
On the Complexity of Symmetric Computations.
{\em Canadian J. of Information Processing and Operational Research}, 
{\bf 12,~1}, 71--86, 1974.


\bibitem{P76}
R. L. Probert,
On the Additive Complexity of Matrix Multiplication.
{\em SIAM J. on Computing}, {\bf 5,~2}, 187--203, 1976.


\bibitem{R02}
R. Raz,
 On the complexity of matrix product. 
In {\em Proceedings of the Thiry-fourth Annual ACM Symposium
on Theory of Computing}, 144--151, ACM Press, New York, 2002.


\bibitem{RS03}
R. Raz, A. Shpilka,
  Lower Bounds for Matrix Product, in Bounded Depth Circuits with Arbitrary Gates.
  {\em SIAM J. on Computing}, {\bf 32,~2},
  488--513, 2003.


\bibitem{R79}
F. Romani, 
private communication, 1979.


\bibitem{R82}
F. Romani, 
Some Properties of Disjoint Sum of Tensors Related to MM,
{\em SIAM J. on Computing}, {\bf 11,~2}, 263--267, 1982.


\bibitem{SM10}
P. Sankowski, M. Mucha,
Fast Dynamic Transitive Closure with Lookahead.
{\em Algorithmica}, {\bf 56,~ 2}, 180--197, 2010.


\bibitem{S81}
A. Sch{\"o}nhage,
 Partial and Total Matrix Multiplication.
{\em SIAM J. on Computing}, {\bf 10,~3}, 434--455, 1981. 


\bibitem{S03}
A. Shpilka,
Lower Bounds for Matrix Product.
  {\em SIAM J. on Computing},
{\bf 32,~5}, 1185--1200, 2003.
Proceedings version in {\em FOCS 2001}.


\bibitem{S12}
A.A. Shabalin, Matrix eQTL: Ultra Fast eQTL Analysis via Large Matrix Operations.
{\em Bioinformatics}, {\bf 28,~10}, 1353--1358, 2012.


\bibitem{S13}
A.V. Smirnov, 
The Bilinear Complexity and Practical Algorithms for Matrix Multiplication. 
{\em Computational Mathematics and Mathematical Physics}, 
{\bf 53,~12}, 1781--1795 (Pleiades Publishing, Ltd), 2013. 
Original Russian Text 
in {\em Zhurnal Vychislitel'noi Matematiki i Matematicheskoi Fiziki}, 
{\bf 53,~12}, 1970--1984, 2013.


\bibitem{S10}
A.J. Stothers,
On the Complexity of Matrix
Multiplication. Ph.D. Thesis, University of Edinburgh, 2010.


\bibitem{S69}
 V. Strassen, Gaussian Elimination Is Not Optimal. 
{\em Numerische Math.}, {\bf 13}, 354--356, 1969.


\bibitem{S72}
 V. Strassen,
Evaluation of Rational Functions, in 
{\em Analytical Complexity of Computations} 
(edited by R.E. Miller, J. W. Thatcher, and J. D. Bonlinger), 
pages 1-10, Plenum Press, New York, 1972.


\bibitem{S73}
 V. Strassen, 
Vermeidung von Divisionen.
{\em J. Reine Angew. Math.}, {\bf 1973,~ 264}, 184--202, 1973.


\bibitem{S74}
 V. Strassen,
Some Results in Algebraic Complexity Theory,
{\em Proceedings of the International Congress of Mathematicians}, Vancouver, 
1974 (Ralph D. James, editor), Volume 1, pages 497--501, 
Canadian Mathematical Society, 1974.


\bibitem{S86}
 V. Strassen, The Asymptotic Spectrum of Tensors and
the Exponent of Matrix Multiplication. 
{\em Proc. 27th Ann. Symposium on Foundation of Computer Science}, 
49--54, 1986.


\bibitem{T55}
J. Todd,
Motivation for Working in Numerical Analysis.
{\em Communication on Pure and Applied Math.}, {\bf 8,~1}, 97--116, 1955.


\bibitem{T63}
A. L. Toom,
The Complexity of a Scheme of Functional Elements Realizing the Multiplication of Integers,
{\em Soviet Mathematics Doklady}, {\bf 3}, 714--716, 1963.


\bibitem{TB97}
N. L. Trefethen, D. Bau, III,
{\em Numerical Linear Algebra}, SIAM, Philadelphia, 1997.


\bibitem{T03}
E. E. Tyrtyshnikov, Tensor Approximations of Matrices 
Generated by Asymptotically Smooth Functions. 
{\em Mat. Sbornik}, {\bf 194,~6}, 147--160, 2003.


\bibitem{V75}
L. Valiant, 
General Context-Free Recognition in Less Than Cubic Time,
{\em J. of Computer and System Sciences}, {\bf 10}, 308--315, 1975.


\bibitem{VL92}
C.F. Van Loan, {\em Computational Frameworks 
for the Fast Fourier Transform}. SIAM, Philadelphia, 
1992.  ISBN: 0-89871-285-8.
 

\bibitem{VW14}
V. Vassilevska Williams,
Multiplying Matrices Faster than Coppersmith--Winograd.
Version available at http://theory.stanford.edu/
virgi/matrixmult-f.pdf, retrieved on
January 30, 2014. Also see
{\em Proc. 44th Annual ACM Symposium on Theory of Computing
(STOC 2012)}, 887--898, ACM Press, New York, 2012.


\bibitem{W70a}
 A. Waksman, 
On Winograd's Algorithm for Inner Products.
{\em IEEE Transactions on Computers}, {\bf C-19,~4}, 360--361, 1970.


\bibitem{W67}
S. Winograd, 
On the Number of Multiplications Required to Compute Certain Functions.
{\em Proc. of the National Academy of Sciences}, {\bf 58,~5}, 1840--1842, 1967.


\bibitem{W68}
S. Winograd, 
A New Algorithm for Inner Product.
{\em IEEE Transaction on Computers},
{\bf C--17,~7}, 693--694, 1968.


\bibitem{W70}
   S. Winograd,
On the Number of Multiplications Necessary to Compute Certain Functions.
{\em Communications on Pure and Applied Mathematics},
{\bf 23,~2}, 165--179, 1970.


\bibitem{W80}
S. Winograd, 
{\em Arithmetic Complexity of Computations}.
CBMS-NSF Regional Conference Series in Applied Math., {\bf 33},
SIAM, Philadelphia, 1980.


\bibitem{XXCB14}
Y. Xi,  J. Xia, S. Cauley, V. Balakrishnan,
Superfast and Stable Structured Solvers for Toeplitz Least 
Squares via Randomized Sampling,
 {\em SIMAX}, {\bf 35}, 44--72, 2014.


\bibitem{XXG12}
J. Xia, Y. Xi, M. Gu, 
A Superfast Structured Solver for Toeplitz Linear Systems via Randomized Sampling,
 {\em SIMAX}, {\bf 33}, 837--858, 2012.


\bibitem{Y15}
 H. Yu, An improved combinatorial algorithm for Boolean matrix multiplication, {\em Proc.  Int. Conf. 
Automata, Language, and Programming (ICALP~2015)}, 1094--1105, 2015.


\bibitem{Y09}
R. Yuster,
Efficient Algorithms on Sets of Permutations, Dominance, and Real-weighted APSP 
{\em Proceedings of the 15th Annual ACM-SIAM Symposium on
Discrete Algorithms (SODA 2009)}, 384--391, 2009.


\bibitem{YZ04}
R. Yuster, U. Zwick, U. Detecting Short Directed Cycles Using Rectangular 
Matrix Multiplication and Dynamic Programming. In
{\em Proceedings of the 15th Annual ACM-SIAM Symposium on
Discrete Algorithms (SODA 2004)}, 254--260, 2004.


\bibitem{YZ05}
R. Yuster, U. Zwick, Fast Sparse Matrix Multiplication.
{\em ACM Transactions on Algorithms}, {\bf 1, ~1}, 2--13, 2005.


\bibitem{YZ05a}
R. Yuster, U. Zwick,
 Answering Distance Queries in Directed Graphs Using Fast Matrix Multiplication. In
{\em Proceedings of 
46th Annual IEEE Symposium on the Foundations of Computer Science (FOCS 2005)},
389--396, IEEE Computer Society Press, 2005.


\bibitem{Z02}
U. Zwick. All-pairs Shortest Paths Using Bridging Sets and Rectangular Matrix Multiplication.
{\em Journal of the ACM}, {\bf 49,~3}, 289--317, 2002.


\end{thebibliography}
\end{document}